\documentclass[8pt,onecolumn]{IEEEtran}
\usepackage{amsmath}
\usepackage{amssymb}
\usepackage{graphics}
\usepackage{epstopdf}
\usepackage{graphicx}
\usepackage{epsfig,amssymb}
\usepackage{amsmath}
\usepackage{mathrsfs}
\usepackage{color}
\usepackage{array}
\usepackage{amsfonts,booktabs}
\usepackage{lipsum,amsmath,multicol}
\usepackage{graphicx}
\usepackage{subfigure}
\usepackage{times}
\usepackage{pdfsync}
\usepackage{pgfplots}
\usepackage{pgfplotstable}
\usepackage[subnum]{cases}
\usepackage{multirow}
\usepackage{algorithm}
\usepackage{algpseudocode}
\usepackage{float}
\usepackage{setspace}
\usepackage{bm}

\usepackage{amsthm}
\usepackage{epstopdf}
\usepackage[center]{caption}
\newtheorem{theorem}{Theorem}
\newtheorem{lemma}{Lemma}

\newcommand{\beq}{\begin{equation}}
\newcommand{\eeq}{\end{equation}}
\newcommand{\beqa}{\begin{eqnarray}}
\newcommand{\eeqa}{\end{eqnarray}}

\newcommand{\paren}[1]{\left(#1\right)}

\begin{document}
	\title{Optimal Privacy-Aware Co-Design of Quantizer and Controller in Networked Control Systems}
	\author{Chuanghong Weng, Ehsan Nekouei \thanks{
			C. Weng and E. Nekouei are with the Department of Electrical Engineering, City University of Hong Kong (e-mail: cweng7-c@my.cityu.edu.hk; enekouei@cityu.edu.hk). 
			
			The work was partially supported by the Research Grants Council of Hong Kong under Project CityU 21208921, a grant from Chow Sang Sang Group Research Fund sponsored by Chow Sang Sang Holdings International Limited.
	}}
	\maketitle
	\begin{abstract}
		This paper investigates the optimal privacy-aware networked control problem, in which the dynamical system affected by a private input process sends its measurement to a remote controller after stochastic quantization. An adversary seeks to infer private system inputs from quantization results and control outputs. The optimal privacy-aware quantizer and controller are obtained by solving a stochastic control problem with mutual information regularization, where the mutual information measures the privacy leakage through the quantizer and controller. We first derive the coupled Bellman equations for the optimal quantizer and controller using the dynamic programming decomposition method. We then analyze the structural properties of the solution, showing that the optimal controller is deterministic, while the optimal quantizer regulates the adversary's belief in a closed-loop manner to enhance privacy. To enable numerical optimization, the quantizer and controller are jointly parameterized and then updated via policy gradient methods, and a binary classification approach is used to approximate privacy leakage. Finally, we validate the effectiveness of the proposed approach through numerical experiments on a building control system.
	\end{abstract}
\subsection{Motivation}
The increasing deployment of networked control systems (NCSs) in smart infrastructures, such as intelligent transportation networks and automated building management systems, has raised significant privacy concerns. These concerns stem from the involvement of third-party computing entities, although legitimately authorized to access sensor measurements and control signals, may exploit this access to infer sensitive information about users or system operations. For example, in automated building control systems, ventilation rates and measured CO$_2$ levels can be analyzed to uncover private occupancy patterns \cite{ebadat2015blind,rueda2020comprehensive}.

Privacy leakage in NCSs can therefore compromise both user confidentiality and system security. To address this, we develop an information-theoretic privacy protection framework for general dynamical systems with private sources, through the co-design of privacy-preserving quantization and control policies. %While privacy-preserving mechanisms for control systems have been extensively studied, most existing literature is confined to linear-Gaussian frameworks, leaving the complexities of non-Gaussian systems relatively unexplored. 
\subsection{Contributions}
We study the privacy-aware design problem for a general networked control system, as illustrated in Fig.~\ref{Fig.NetCtrSys}. In this setting, the system state $X_t$ evolves under the influence of a private input $Y_t$. The measurement of the state $Z_t$ is randomly mapped to an index $S_t$ via a stochastic quantizer, which is then transmitted to a remote controller to generate the control input $U_t$. At the same time, an untrusted third party, referred to as the adversary, has access to both the quantization result and the control input, and may infer the private input process $Y_t$.

To mitigate such privacy leakage, we investigate the joint design of a privacy-aware quantizer and controller that balances control performance and privacy preservation. Privacy leakage is quantified by the mutual information between the private process and the pair consisting of the quantization index $S_t$ and the control signal $U_t$. The resulting co-design problem is formulated as a finite-horizon optimization that minimizes a weighted sum of the privacy leakage and the control cost. The main contributions of this paper are summarized as follows:
\begin{itemize}
	\item \textit{Optimal Privacy-Aware Networked Control via Dynamic Programming Decomposition:} We derive coupled backward optimality equations for the quantizer and controller. The analysis reveals that the optimal quantization policy regulates the adversary's belief about the private input through an inner closed-loop mechanism.
	
	\item \textit{Privacy-Aware Policy Gradient Method:} We jointly parameterize the quantizer and controller and derive a policy-gradient formulation for the privacy-aware networked control objective. In addition, we introduce a classification-based method for approximately computing the privacy leakage and incorporate it into the policy-gradient framework for the joint optimization of the quantization and control policies.
	
	\item \textit{Simulation Study:} We validate the proposed privacy-aware control framework in a building automation scenario, showing that indoor CO$_2$ levels can be effectively regulated while preserving occupancy privacy.
\end{itemize}
\begin{figure}[t]
	\centering
	\includegraphics[width=0.35\textwidth]{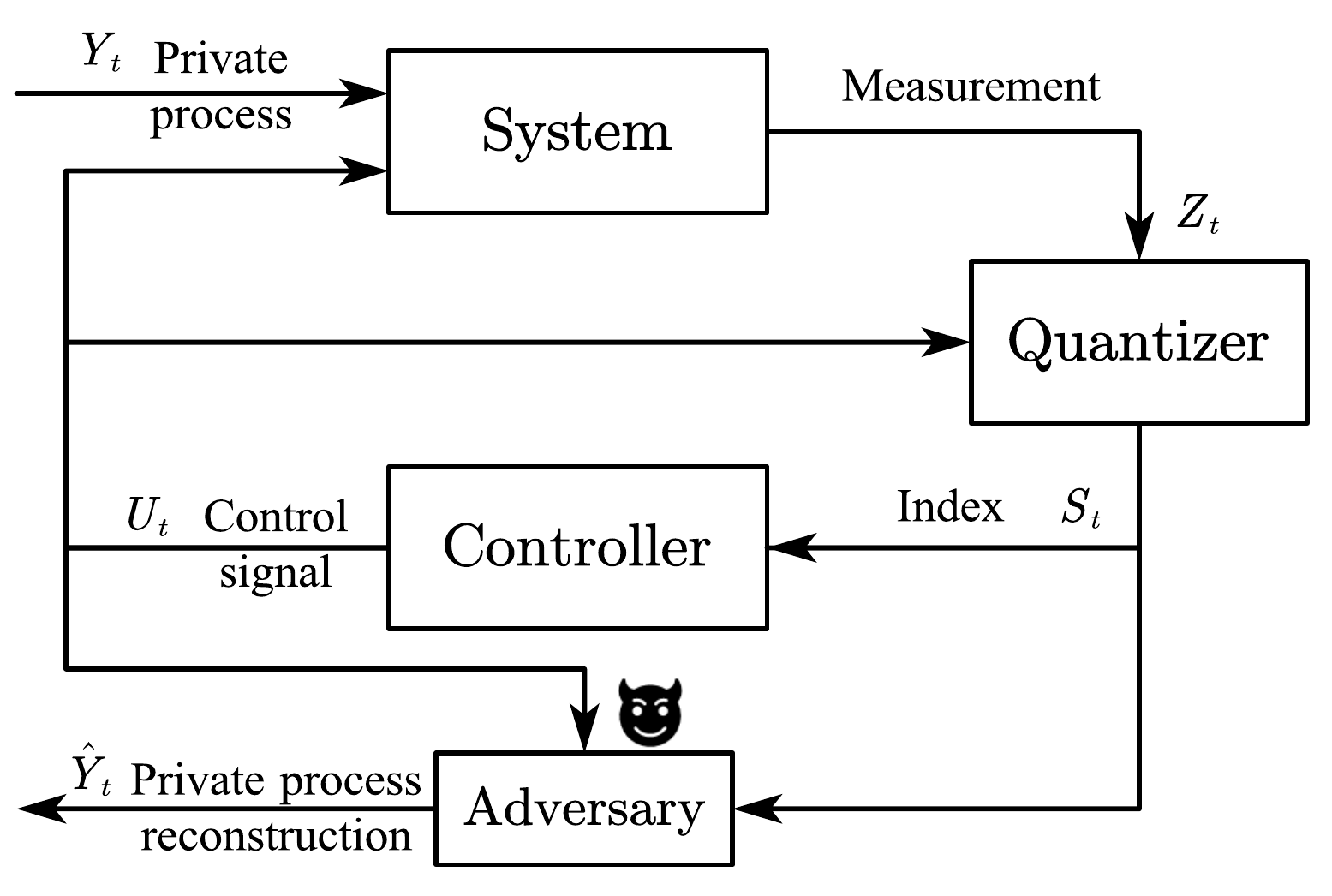}
	\caption{The privacy-aware networked control setup.}\label{Fig.NetCtrSys}
\end{figure}
\subsection{Related work}
Stochastic privacy-preserving approaches for dynamical systems can be broadly classified into differential privacy \cite{dwork2008differential} and information-theoretic privacy \cite{bloch2021overview}. In \cite{le2013differentially}, a differentially private Kalman filter was proposed, together with redesign methods for approximating stable filters under privacy constraints. In \cite{huang2014cost}, a communication strategy was developed to enable agents to exchange information while protecting their internal states. The trade-off between privacy preservation and tracking accuracy in distributed linear control systems, based on the Laplace mechanism, was investigated in \cite{wang2017differential}. More recently, \cite{chen2023differentially} studied privacy in distributed consensus optimization and proposed a differentially private decomposition mechanism that randomly splits the gradient into two sub-states while preserving consensus convergence.

Information-theoretic privacy approaches use mutual information, conditional entropy, and related quantities as privacy metrics, and their applications to dynamical systems have gained attention in recent years \cite{nekouei2019information}. In \cite{tanaka2017directed}, directed information was minimized to reduce privacy leakage in cloud-based linear control systems, and the optimal solution was shown to be a Linear-Gaussian randomized policy obtained via semidefinite programming. Privacy leakage in smart metering systems was analyzed in \cite{li2018information} using mutual information, and a rechargeable battery–based protection mechanism was proposed. In \cite{molloy2023smoother}, the state obfuscation problem in partially observable Markov decision processes (POMDPs) was studied and solved via piecewise-linear optimization techniques \cite{araya2010pomdp}. The work in \cite{zhang2022privacy} addressed privacy leakage of sensitive states through actions in reinforcement learning and developed policy-gradient methods to minimize disclosure while maximizing rewards. Despite these advances, most existing studies on privacy protection in closed-loop control systems are restricted to linear-Gaussian settings. In contrast, this work considers privacy-aware quantizer–controller co-design for non-Gaussian systems. The resulting problem does not fit within the standard POMDP framework \cite{molloy2023smoother}. Instead, it admits a dynamic programming decomposition based on partial information, in which a collection of policies, rather than a single policy, is optimized. Furthermore, in the non-Gaussian setting, mutual information generally does not admit a closed-form expression, which makes conventional convex optimization tools inapplicable. These challenges motivate the development of numerical methods for estimating privacy leakage and solving the quantizer–controller co-design problem.

Beyond differential privacy and information-theoretic privacy, other measures have been proposed, primarily in static settings. These include $R\acute{e}nyi$ min-entropy, maximal leakage, and their variants. The $R\acute{e}nyi$ min-entropy was introduced in \cite{smith2009foundations} to quantify the privacy leakage in deterministic data processing programs. Building on this, \cite{issa2019operational} defined maximal leakage, studied its relation to mutual information, and demonstrated its use in mechanism design. Pointwise maximal leakage, a generalization of maximal leakage, was introduced in \cite{saeidian2023pointwise}, where its connections to other privacy notions were also discussed. In \cite{calmon2013bounds}, a $k$-correlation metric for privacy was proposed, based on the largest $k$ principal inertia components of the joint distribution between private and public random variables. However, applications of these advanced privacy metrics to dynamical systems remain limited.
\subsection{Outline}
The remainder of this paper is organized as follows. Section~\ref{Sec:Prob} presents the system model and formulates the privacy-aware networked control problem. Section~\ref{Sec:Stru} investigates the structural properties of the optimal quantizer and controller. Section~\ref{Sec:Comp} introduces the proposed privacy-aware policy-gradient method for the joint optimization of the quantization and control policies. Section~\ref{Sec:Sim} provides numerical results to demonstrate the effectiveness of the proposed approach. Finally, Section~\ref{Sec:Con} concludes the paper.
\subsection{Notation}
We adopt the following notation throughout the paper. Random variables are denoted by uppercase letters, e.g., $X \in \mathbb{X}$ and $Y \in \mathbb{Y}$, while their realizations are denoted by lowercase letters, e.g., $x$ and $y$. The notation $X^T$ represents the sequence $\left[X_1, X_2, \ldots, X_T\right]$, and $X_t^T$ denotes $\left[X_t, X_{t+1}, \ldots, X_T\right]$. Probability density functions are denoted by $p(\cdot)$, whereas probability mass functions are denoted by $P(\cdot)$. Conditional distributions are written as $p(X \mid Y)$ or $P(X \mid Y)$, depending on whether the variables are continuous or discrete. Moreover, for $t \geq 1$, $p(Y_t \mid X^{t-1})$ denotes the conditional density of $Y_t$ given $X^{t-1}$, with $p(Y_1 \mid X_{0}) \triangleq p(Y_1)$. Similarly, $P(Y_t \mid X^{t-1})$ denotes the corresponding conditional probability mass function, with $P(Y_1 \mid X_0) \triangleq P(Y_1)$. The expectation of a function $f(X)$ is denoted by $\mathsf{E}[f(X)]$. For discrete random variables, we have $\mathsf{E}[f(X)] = \sum_{x \in \mathcal{X}} P(x) f(x)$,
where, for brevity, the domain $\mathcal{X}$ may be omitted, \emph{i.e.}, $\mathsf{E}[f(X)] = \sum_x P(x) f(x)$. For continuous random variables, the expectation is expressed as $\mathsf{E}[f(X)] = \int p(x) f(x) \, dx$. The entropy of a random variable $X$ is defined as $H(X) = -\mathsf{E}[\log P(X)]$. For two random variables $X$ and $Y$, the conditional entropy is $H(X \mid Y) = -\mathsf{E}[\log P(X \mid Y)]$, and the mutual information between $X$ and $Y$ is $I(X; Y) = H(X) - H(X \mid Y)$.
\section{Problem Formulation} \label{Sec:Prob}
We consider a controlled stochastic system with continuous state $X_t \in \mathbb{X} \subseteq \mathbb{R}^{n_x}$, discrete private input $Y_t \in \mathbb{Y} \subseteq \mathbb{R}^{n_y}$, and discrete control input $U_t \in \mathbb{U} \subseteq \mathbb{R}^m$. The system dynamics are governed by the stochastic transition kernel
\begin{align} \label{Eq.Dynamics-state}
	p\left(X_{t+1} \mid X_t, Y_t, U_t\right),
\end{align}
where the private input process $\{Y_t\}$ is modeled as a first-order Markov chain with transition probability
\begin{align}
	P\left(Y_t \mid Y_{t-1}\right).
\end{align}
The system state $X_t$ is not directly observable. Instead, a discrete measurement $Z_t$ is generated according to the observation model
\begin{align} \label{Eq.Dynamics-obs}
	P\left(Z_t \mid X_t\right).
\end{align}
Notably, the system model \eqref{Eq.Dynamics-state}--\eqref{Eq.Dynamics-obs} can be used to study both linear-Gaussian and nonlinear non-Gaussian systems with discrete private input process.

As illustrated in Fig.~\ref{Fig.NetCtrSys}, the quantizer stochastically maps the measurement $Z_t$ to an index $S_t\in \mathbb{S}=\left\{0,1,\cdots, M-1\right\}$, which is then transmitted to the remote controller to determine the control input $U_t$. An adversary, assumed to have complete knowledge of the system dynamics, observes both the quantization index and the control input and seeks to infer the private input $Y_t$. This motivates the need for a privacy-aware co-design of the quantizer and controller. To build intuition, we next present an illustrative example in the context of building control and then develop a general framework for privacy-aware networked control.
\subsection{Motivating Example}
We consider a regularized indoor CO$_2$ concentration control problem \cite{ebadat2015blind,rueda2020comprehensive}. Let the regularization error be the deviation of the indoor CO$_2$ concentration from the target setpoint. The corresponding linearized model of the error state is given by
\begin{align}\label{Eq.Dynamics-CO2}
	X_{t+1} = aX_t + b_yY_t + b_uU_t + W_t,
\end{align}
where $X_t \in \mathbb{R}^n$ denotes the CO$_2$ concentration error state, $Y_t \in \mathbb{R}^{n_y}$ denotes the occupancy input, $U_t \in \mathbb{R}^m$ denotes the ventilation control input, and $W_t \in \mathbb{R}^n$ represents the process noise. The matrices $a \in \mathbb{R}^{n \times n}$, $b_y \in \mathbb{R}^{n \times n_y}$, and $b_u \in \mathbb{R}^{n \times m}$ are the associated system matrices. The observation derived from quantized sensor measurements and the setpoint is approximately modeled as
\begin{align}\label{Eq.Measurement-CO2}
	Z_t = \mathrm{Qua}\left(X_t + V_t\right),
\end{align}
where $Z_t$ denotes the quantized observation of the CO$_2$ concentration error state, $V_t$ is the measurement noise, and $\mathrm{Qua}(\cdot)$ denotes the quantization operator.

To regulate the indoor CO$_2$ concentration around the desired setpoint, the observation $Z_t$ is transmitted to a cloud-based controller, which computes the ventilation command $U_t$. Notably, the cloud service provider has access to both the observation and the control input, and it may act as a curious-but-honest adversary. In particular, it may infer sensitive occupancy information, i.e., the number of occupants and their mobility patterns in the building.

To illustrate this privacy risk, we simulate the indoor CO$_2$ error-state dynamics with $a = 0.85$, $b_y = 0.6$, and $b_u = -0.1$. The process noise $W_t$ and the measurement noise $V_t$ are modeled as independent and identically distributed zero-mean Gaussian random variables with standard deviation $0.1$. Since the measurement noise is relatively small, we simplify the feedback design by assuming that the controller directly uses $Z_t$ to compute the ventilation input, i.e., $U_t = K_p Z_t$ with $K_p = 6.2$. 

Figure~\ref{Fig.EstExp}(a) shows a sample trajectory of the indoor CO$_2$ concentration error state, and Fig.~\ref{Fig.EstExp}(b) shows the corresponding occupancy trajectory. The misdetection events of a maximum-likelihood occupancy estimator are presented in Fig.~\ref{Fig.EstExp}(c), where $\mathbf{1}\{\hat{Y}_t \neq Y_t\}$ equals one whenever the estimator makes an incorrect occupancy estimate. As illustrated in the figure, the estimator can accurately infer the occupancy pattern from the observed error-state trajectory. Such inference capability may be exploited for undesirable purposes, including targeted advertising, surveillance, or even burglary, which motivates the need for a privacy-aware networked control design.
\begin{figure}
	\centering
	\subfigure{
		\centering
		\includegraphics[width=0.30\textwidth]{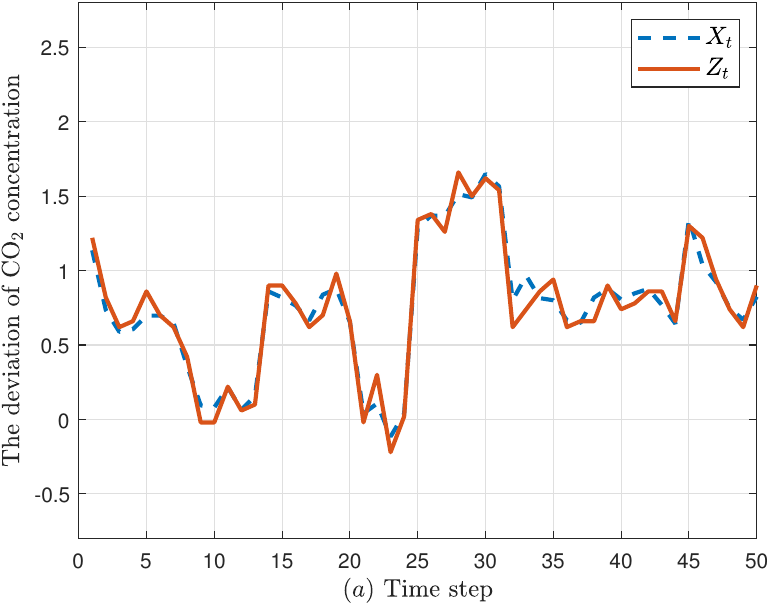}
	}
	\subfigure{
	\centering
	\includegraphics[width=0.30\textwidth]{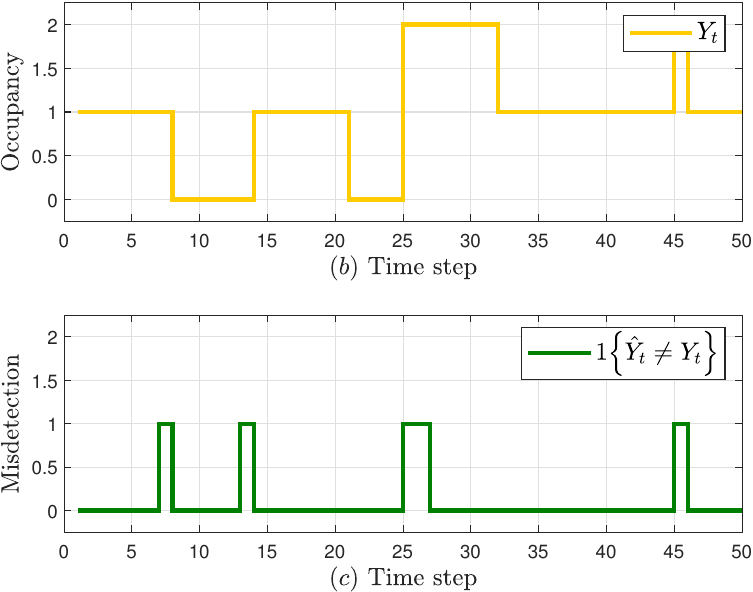}
	}
	\caption{Trajectories of the deviation of CO$_2$ concentration and measurements $(a)$, the occupancy trajectory $(b)$, and the misdetection instances of the occupancy estimator $(c)$.}
	\label{Fig.EstExp}
\end{figure}
\subsection{Optimal Privacy-Aware Quantizer-Controller Co-Design}
We consider the privacy-aware networked control framework illustrated in Fig.~\ref{Fig.NetCtrSys}. The stochastic quantizer generates an index $S_t \in \mathbb{S}$ based on the history of quantized measurements $Z^t=\{Z_1,\ldots,Z_t\}$, past quantization indices $S^{t-1}=\{S_1,\ldots,S_{t-1}\}$, and past control signals $U^{t-1}=\{U_1,\ldots,U_{t-1}\}$. The stochastic quantization policy at time $t$ is denoted by $
\pi_t^e\!\left(S_t \mid Z^t, S^{t-1}, U^{t-1}\right)$, where $\pi_t^e(\cdot \mid Z^t, S^{t-1}, U^{t-1})$ is a conditional probability distribution over the finite set $\mathbb{S}=\{0, 1, \ldots, M-1\}$. For example, $\pi_t^e(S_t=0 \mid Z^t, S^{t-1}, U^{t-1})$ denotes the probability that the quantizer selects the first index at time $t$. When $t=1$, no past decision exists, and the policy reduces to $\pi_1^e(S_1 \mid Z_1)$. The overall quantization policy over the horizon $t=1,\ldots,T$ is then defined as the sequence $\pi^e=\{\pi_t^e\}_{t=1}^T$.  

Similarly, the controller generates the discrete control input $U_t$ according to the conditional probability distribution $\pi_t^c\!\left(U_t \mid S^t, U^{t-1}\right)$, with $\pi_1^c(U_1 \mid S_1)$ for $t=1$. The control policy over the horizon $t=1,\ldots,T$ is defined as the sequence $\pi^c=\{\pi_t^c\}_{t=1}^T$.  

To quantify privacy leakage, we adopt the mutual information between the private input sequence $Y^T$ and the pair $(S^T,U^T)$, defined as follows
\begin{equation}
	I(S^T,U^T;Y^T) 
	= \sum_{s^T,u^T,y^T} P(s^T,u^T,y^T) 
	\log \frac{P(s^T,u^T,y^T)}{P(s^T,u^T)\,P(y^T)},
\end{equation}
which measures the statistical dependence between $Y^T$ and  $(S^T,U^T)$ \cite{cover1999elements}. Intuitively, $I(S^T,U^T;Y^T)$ captures the reduction in the adversary's uncertainty about $Y^T$ due to the disclosure of $S^T$ and $U^T$. According to the definition of mutual information, $I(S^T,U^T;Y^T)=H(Y^T)-H(Y^T|S^T,U^T)\leq H(Y^T)$. In the extreme case where $I(S^T,U^T;Y^T)$ equals the entropy $H(Y^T)$, the private input sequence $Y^T$ can be perfectly inferred by the adversary \cite{cover1999elements}. Therefore, minimizing $I(S^T,U^T;Y^T)$ enhances the privacy of $Y^T$.

In addition, we measure the control performance loss by $c(X_t,U_t)$, with $c(X_t,U_t)=c(X_T)$ when $t=T$. Based on these definitions, the optimal privacy-aware quantizer-controller co-design problem can be formulated as follows
\begin{equation}\label{Eq.OP}
	\min_{\left\{\pi^e, \pi^c\right\}} \mathcal{L}\left(\pi^e, \pi^c\right)= \min_{\left\{ \pi _{t}^e,\pi _{t}^c \right\} _{t=1}^{T}} \sum_{t=1}^{T}{\mathsf{E}\left[ c\left( X_t,U_t \right)\right]}+\lambda I\left( S^T,U^{T};Y^T \right) ,
\end{equation}
where $\lambda$ is a positive constant to penalize the privacy leakage.
\section{Structural Properties of The Optimal Solution}\label{Sec:Stru}
In this section, we analyze the structural properties of the optimal privacy-aware quantizer-controller co-design. We show that the optimal controller is deterministic, while the optimal quantizer admits a hierarchical structure with a local closed-loop mechanism that regulates the adversary's belief about the private information.
\subsection{Deterministic Controller}
We now show that it is sufficient to restrict the controller design to the class of deterministic functions.  

\begin{lemma} \label{Lm.DetCtr}
	The objective function can be reformulated as 
	\begin{align} \label{Eq.simpObj}
		\min_{\pi^e, \pi^c} \mathcal{L}\left(\pi^e, \pi^c\right)=\min_{\left\{ \pi _{t}^e,\pi _{t}^c \right\} _{t=1}^{T}} \sum_{t=1}^T{\mathsf{E}\left[ c\left( X_t,U_t \right) +\lambda c_{i}\left( S_t,Y^{t-1},S^{t-1},U^{t-1} \right) \right]}, 
	\end{align}
	where $c_{i}$ is the information loss
	\begin{align}
		c_{i}\left( S_t,Y^{t-1},S^{t-1},U^{t-1} \right) =\log \frac{P\left( S_t,Y^{t-1} \middle| S^{t-1},U^{t-1} \right)}{P\left( S_t \middle| S^{t-1},U^{t-1} \right) P\left( Y^{t-1} \middle| S^{t-1},U^{t-1} \right)}.
	\end{align}
	Also, the optimal controller can be chosen from the space of deterministic functions, \emph{i.e.},  
	\[
	U_t = \pi_t^c(S^t, U^{t-1}),
	\]
	without loss of optimality.
\end{lemma}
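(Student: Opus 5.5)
The plan is to treat the two claims separately. I would first rewrite the mutual-information term as a sum of per-stage quantities using the chain rule and the causal structure of the loop. I would then show that, once the objective is written this way, it is affine in each controller kernel $\pi_t^c$, so a deterministic minimizer exists.

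\emph{Step 1 (reformulation).} Apply the chain rule to the interleaved sequence $S_1,U_1,\ldots,S_T,U_T$:
\begin{align*}
I(S^T,U^T;Y^T)=\sum_{t=1}^T\Big[I\big(S_t;Y^T\,\big|\,S^{t-1},U^{t-1}\big)+I\big(U_t;Y^T\,\big|\,S^t,U^{t-1}\big)\Big].
\end{align*}
The second term vanishes: $U_t$ is drawn from $\pi_t^c(\cdot\mid S^t,U^{t-1})$ using randomness independent of $(X,Y,Z)$, so $U_t - (S^t,U^{t-1}) - Y^T$ is a Markov chain.

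For the first term I would use the chain rule again, $I(S_t;Y^T\mid S^{t-1},U^{t-1})=I(S_t;Y^{t-1}\mid\cdot)+I(S_t;Y_t^T\mid Y^{t-1},S^{t-1},U^{t-1})$, and show the last piece is zero. The reason is that $S_t$ is generated from $(Z^t,S^{t-1},U^{t-1})$ and the quantizer's own randomness. By \eqref{Eq.Dynamics-state}--\eqref{Eq.Dynamics-obs}, $Z^t$ depends on $Y$ only through $Y^{t-1}$, since $X_t$ is driven by $Y_{t-1}$. Meanwhile $Y_t^T$ depends on the past only through $Y_{t-1}$ and Markov-chain innovations that are independent of all noises generating $(X^t,Z^t,S^{t-1},U^{t-1},S_t)$. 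Factorizing the joint law makes this conditional independence explicit. Writing the remaining conditional mutual information as an expectation of a log-ratio gives exactly $\mathsf{E}[c_i(S_t,Y^{t-1},S^{t-1},U^{t-1})]$, which yields \eqref{Eq.simpObj}.

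\emph{Step 2 (controller-independence of the information cost).} This is the step I expect to be the main obstacle. I need to show that, on every history with positive probability, the conditional laws $P(S_t,Y^{t-1}\mid S^{t-1},U^{t-1})$, $P(S_t\mid S^{t-1},U^{t-1})$ and $P(Y^{t-1}\mid S^{t-1},U^{t-1})$ do not depend on $\pi^c$.

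Write the joint law of $(Y^{t-1},X^t,Z^t,S^t,U^{t-1})$ as a product of model kernels, quantizer kernels and the factors $\prod_{k<t}\pi_k^c(u_k\mid s^k,u^{k-1})$. These controller factors are functions of $(s^{t-1},u^{t-1})$ only. Hence they cancel when conditioning on $(S^{t-1},U^{t-1})=(s^{t-1},u^{t-1})$, by the same argument as for Bayes' rule with an action-independent likelihood. Consequently $c_i$, viewed as a function of the realized path, is determined by the quantizer and the model alone. Some care is needed on histories that have zero probability under one controller but not another. I would handle this by defining the posteriors through the controller-free (unnormalized) product, which is well defined whenever the quantizer and model give positive mass.

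\emph{Step 3 (deterministic controller).} Fix $\pi^e$ and all $\pi_k^c$ with $k\neq t$. The objective \eqref{Eq.simpObj} is a sum over paths of the product of all kernels times a path cost $\sum_k[c(x_k,u_k)+\lambda c_i(\cdot)]$. By Step 2 this path cost is independent of $\pi^c$. The objective is therefore affine in the kernel $\pi_t^c(\cdot\mid s^t,u^{t-1})$ for each fixed argument. Replacing that kernel by a point mass on a minimizing $u$, separately for each history, does not increase $\mathcal{L}$. Doing this successively for $t=T,T-1,\ldots,1$ yields a deterministic controller $U_t=\pi_t^c(S^t,U^{t-1})$ that is at least as good. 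Applying the same procedure to an optimal pair gives the claim without loss of optimality.
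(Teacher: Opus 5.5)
Your proof is correct. Its skeleton matches the paper's: first the chain-rule expansion of Lemma~\ref{Lm.MI}, then linearity in the controller kernel. In Step~1 you only reorder things. You remove the $U_t$ term before splitting $Y^T$, while the paper splits $Y^T$ first and drops $U_t$ afterwards. Both versions rest on the same two conditional independences.

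The real difference is in the determinism part. The paper says only that the conditional stage cost given $(S^t,U^{t-1})$ is linear in the stochastic control policy, and then cites the standard POMDP result of \cite{krishnamurthy2016partially}. That result assumes per-stage costs that are fixed functions of the path. Here, however, $c_i$ is defined through the policy-induced law. So a priori the later terms $c_i(S_k,Y^{k-1},S^{k-1},U^{k-1})$ for $k>t$ could depend on $\pi_t^c$, and then the objective need not be affine in that kernel.

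Your Step~2 fills exactly this gap. The factors $\prod_{k<t}\pi_k^c(u_k\mid s^k,u^{k-1})$ are functions of the conditioning variables $(s^{t-1},u^{t-1})$, so they cancel. Hence the posteriors given $(S^{t-1},U^{t-1})$ depend only on the model and $\pi^e$. This agrees with the later belief recursion $\Phi$, where $\gamma_t^c$ enters only through the realized $U_t$.

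You also define the path cost through the controller-free unnormalized product. This makes $\mathcal{L}$ genuinely multilinear over the full product of simplices, including at vertices that were previously unsupported. As a result, the history-by-history vertex replacement in Step~3 is valid.

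One minor point: your closing step assumes an optimal pair exists. The statement's use of $\min$ already presumes this. Without that assumption, your argument still shows that the infimum over deterministic controllers equals the infimum over stochastic ones.

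In short, your route is longer but self-contained. The paper's route is shorter but leans on a citation whose hypotheses it does not check, and your Step~2 is what supplies that check.
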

\begin{proof}
	See Appendix~\ref{App:Lm.DetCtr}.
\end{proof}

Although the original optimization problem in \eqref{Eq.OP} allows for stochastic controllers, Lemma~\ref{Lm.DetCtr} establishes that it suffices to consider deterministic policies. This restriction significantly reduces the search space and simplifies the optimization procedure.
\subsection{Equivalent Optimization}
In this subsection, we introduce auxiliary optimization problems for the quantizer and controller and establish their equivalence with \eqref{Eq.OP}. To this end, consider the following collection of conditional distributions,
\begin{align}
	\gamma_t^{e}
	=
	\left\{
	q_{s_t|z^t}:\mathbb{S}\times\mathbb{Z}^t \rightarrow [0,1]
	\,\middle|\,
	\sum_{s_t \in \mathbb{S}} q_t(s_t \mid z^t)=1
	\right\},
\end{align}
where $q_t(\cdot \mid z^t)$ denotes the conditional distribution of $S_t$ given $z^t$. We also define the following collection of measurable functions,
\begin{align}
	\gamma_t^{c}
	=
	\left\{
	\mu_t:\gamma_t^{e}\times\mathbb{S}\rightarrow\mathbb{U}
	\right\},
\end{align}
where $\mu_t$ is a measurable function that depends on the distribution collection $\gamma_t^{e}$ and the quantization index $s_t$, with discrete output $u_t = \mu_t(\gamma_t^{e}, s_t)$. The joint policy collection is then defined as
\begin{align}
	\gamma_t = \left\{\gamma_t^{e},\gamma_t^{c}\right\},
\end{align}
which implicitly depends on $S^{t-1}$ and $U^{t-1}$, i.e., $\gamma_t(S^{t-1},U^{t-1})$. Let the corresponding sequences be denoted by $\gamma^e=\{\gamma_t^e\}_{t=1}^T$, $\gamma^c=\{\gamma_t^c\}_{t=1}^T$, and $\gamma=\{\gamma_t\}_{t=1}^T$. Based on these policy collections, we next formulate two auxiliary optimization problems.

\begin{lemma}\label{Lm.Equivalent}
	The original optimization problem \eqref{Eq.OP} is equivalent to the following auxiliary optimization problems:
	\begin{align}\label{Eq.OP2}
		\min_{\gamma}\mathcal{L}(\gamma)
		=
		\min_{\{\gamma_t\}_{t=1}^{T}}
		\sum_{t=1}^T
		\mathsf{E}
		\!\left[
		c(X_t,U_t)
		+ \lambda c_{i}(S_t,Y^{t-1},S^{t-1},U^{t-1})
		\right],
	\end{align}
	and
	\begin{align}\label{Eq.OP3}
		\min_{\{\gamma^e,\gamma^c\}} \mathcal{L}(\gamma^e,\gamma^c)
		=
		\min_{\{\gamma_t^e,\gamma_t^c\}_{t=1}^{T}}
		\sum_{t=1}^T
		\mathsf{E}
		\!\left[
		c(X_t,U_t)
		+ \lambda c_{i}(S_t,Y^{t-1},S^{t-1},U^{t-1})
		\right],
	\end{align}
	where $\gamma_t^e$ and $\gamma_t^c$ are implicitly allowed to depend on $S^{t-1}$ and $U^{t-1}$.
\end{lemma}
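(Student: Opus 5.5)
Starting from Lemma~\ref{Lm.DetCtr}, \eqref{Eq.OP} is equivalent to minimizing the decomposed cost \eqref{Eq.simpObj} over quantization policies $\pi^e$ and deterministic control policies $U_t=\pi_t^c(S^t,U^{t-1})$. The cost functional in \eqref{Eq.OP2} and \eqref{Eq.OP3} is identical to that in \eqref{Eq.simpObj}, so only the feasible sets differ. I will therefore show that the three policy classes generate the same family of joint distributions of $(X^T,Y^T,Z^T,S^T,U^T)$. This is enough because every term, including $c_i$, which involves conditional probabilities under the joint law, is a functional of that joint distribution.

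\textbf{From $\pi$ to $\gamma$.} Fix $(\pi^e,\pi^c)$ with deterministic $\pi^c$. For each history $(s^{t-1},u^{t-1})$, define
\[
q_t(s_t\mid z^t)=\pi_t^e(s_t\mid z^t,s^{t-1},u^{t-1}).
\]
This is an element of $\gamma_t^e$ indexed by the history. Also define
\[
\mu_t(q_t,s_t)=\pi_t^c(s^{t-1},s_t,u^{t-1}).
\]
This makes sense because, once the history is fixed, $\mu_t$ only needs to map $s_t$ to $\mathbb{U}$; its dependence on $\gamma_t^e$ is allowed but not used. With the history $(S^{t-1},U^{t-1})$ inserted, $\gamma_t(S^{t-1},U^{t-1})$ reproduces exactly the conditional kernels $\pi_t^e$ and $\pi_t^c$. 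Multiplying by the fixed model kernels $p(X_{t+1}\mid X_t,Y_t,U_t)$, $P(Y_t\mid Y_{t-1})$ and $P(Z_t\mid X_t)$ therefore gives the same joint law, so $\min_\gamma\mathcal L(\gamma)\le\min\mathcal L(\pi^e,\pi^c)$.

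\textbf{From $\gamma$ to $\pi$.} Given a history-dependent $\gamma_t(S^{t-1},U^{t-1})=\{q_t,\mu_t\}$, set
\[
\pi_t^e(s_t\mid z^t,s^{t-1},u^{t-1})=q_t^{(s^{t-1},u^{t-1})}(s_t\mid z^t)
\]
and
\[
\pi_t^c(s^t,u^{t-1})=\mu_t^{(s^{t-1},u^{t-1})}\big(q_t^{(s^{t-1},u^{t-1})},s_t\big).
\]
These are admissible quantizer and controller policies, and the second is deterministic. They generate the same joint law, which gives the reverse inequality and establishes \eqref{Eq.OP2}.

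\textbf{Equivalence of \eqref{Eq.OP2} and \eqref{Eq.OP3}.} The pair $\gamma_t=\{\gamma_t^e,\gamma_t^c\}$ is a product of the two collections, and both are allowed to depend on $(S^{t-1},U^{t-1})$. Minimizing jointly over $\{\gamma_t\}$ is then the same as minimizing over $\{\gamma_t^e,\gamma_t^c\}$, since the feasible sets coincide as Cartesian products.

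\textbf{Main obstacle.} The delicate step is checking that the information loss $c_i$ is unchanged under these reformulations. The term $c_i$ depends on $P(S_t,Y^{t-1}\mid S^{t-1},U^{t-1})$ and on its marginals. These are determined by the joint law induced through the chain rule
\[
P(y^{t-1},s^t,u^{t-1})=\sum_{z^t}\int P(\text{model})\prod_k \pi_k^e\,\mathbf 1\{u_k=\pi_k^c\}\,dx.
\]
I will verify this expansion explicitly to confirm that the quantizer enters only through the conditional $q_t(\cdot\mid z^t)$ at each fixed history. I will also handle measurability of the history-indexed maps and the conditional probabilities on zero-probability histories, which can be assigned arbitrarily without affecting the expectation.
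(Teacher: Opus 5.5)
Your proof is correct and follows the paper's argument. The paper likewise builds $\gamma_t^e,\gamma_t^c$ from $(\pi^e,\pi^c)$ by freezing the history $(S^{t-1},U^{t-1})$, and conversely evaluates the history-indexed collections to recover $\pi^e$ and a deterministic $\pi^c$. The paper simply asserts that the objective is ``clearly'' unchanged, while your appeal to equality of the induced joint law (which also fixes $\mathsf{E}[c_i]$) justifies this; your Cartesian-product remark handles \eqref{Eq.OP3} directly where the paper only says it ``can be proved similarly.''
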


\begin{proof}
	See Appendix~\ref{App:Lm.Equivalent}.
\end{proof}

Lemma~\ref{Lm.Equivalent} shows that the original objective in \eqref{Eq.OP} can be reformulated as a centralized optimization problem in \eqref{Eq.OP2}, based on the common information shared by the quantizer and controller \cite{Nayyar2013,Hubbard2026}, and equivalently as a sequential dynamic game in \eqref{Eq.OP3}. A key feature of both formulations is that the optimization is carried out over the quantization policy collection rather than over an individual quantization policy. This structure arises inherently from the mutual-information regularization. Specifically, the mutual information characterizes the privacy leakage induced by the entire policy collection $\gamma^e$, since it depends on $\gamma^e$ rather than on a single quantization policy $\pi^e$. In particular, the conditional distribution
\begin{align}
	P(S_t,Y^{t-1}\mid S^{t-1},U^{t-1})
	=
	\sum_{z^t}
	\pi_t^e(S_t \mid z^t,S^{t-1},U^{t-1})
	P(z^t,Y^{t-1}\mid S^{t-1},U^{t-1}) \nonumber
\end{align}
appearing in the conditional mutual information is obtained by aggregating all quantization policies associated with the same decision information $(S^{t-1},U^{t-1})$ across different measurement histories $Z^t$, i.e., $q_t\left(\cdot|z^t\right)$ in the collection $\gamma_t^e$.

Therefore, the optimization problem in \eqref{Eq.OP} cannot be solved by determining only the optimal individual quantization distribution $\pi_t^e(\cdot \mid Z^t,S^{t-1}=s^{t-1},U^{t-1}=u^{t-1})$
while the remaining distributions $\pi_t^e(\cdot \mid Z^t,S^{t-1}\neq s^{t-1},U^{t-1}\neq u^{t-1})$
remain unspecified. In other words, to solve \eqref{Eq.OP}, it is sufficient to jointly optimize the quantization policy collection $\gamma^e$ corresponding to the same realizations of $S^{t-1}$ and $U^{t-1}$.
%
%The auxilary decision-making problem over the coding policy collection $\mathcal{A}=\left\{\mathcal{A}_t\right\}_{t=1}^T$ is defined as follows,
%\begin{equation}\label{Eq.OP2}
%	\min_{\left\{\mathcal{A}, \pi^c \right\}} \tilde{\mathcal{L}} \left(\mathcal{A}, \pi^c\right)=\min_{\left\{\mathcal{A}_t, \pi_{t}^c\right\}^T_{t=1}} \sum_{t=1}^{T}{\mathsf{E}\left[ c\left( X_t,U_t \right) \right]} +\lambda I\left( \left. S_t;Y^{t-1} \right| S^{t-1}, U^{t-1}\right),
%\end{equation}
%where we implicitly allow  $\mathcal{A}_{t}$ to  depend on ${S^{t-1}}$ and $U^{t-1}$. The following result establishes the equivalence between the original problem \eqref{Eq.OP} and the auxiliary problem \eqref{Eq.OP2}.  
%
%\begin{lemma} \label{Lm.Equivalent}
%	The original optimization problem \eqref{Eq.OP} is equivalent to the auxiliary optimization problems.
%\end{lemma}
%\begin{proof}
%	See Appendix~\ref{App:Lm.Equivalent}.
%\end{proof}
\subsection{Optimality Equations}
We next derive the coupled Bellman equations of the privacy-aware quantizer and controller based on the optimality principle.
\begin{theorem}\label{Th.OPTEQU}
	The Bellman optimality equation for the joint policy collection optimization \eqref{Eq.OP2} is given by,
	\begin{align}\label{Eq.VFEst2}
		J_{t}^{\star}\left( b_{t}^{e} \right) =\min_{\gamma _t} \int{\sum_{s_t,y^{t-1},z^t}{q _{t}\left( \left. s_t \right|z^t \right) b_{t}^{e}\left( x_t,y^{t-1},z^t \right) \left[ c\left( x_t,\gamma _{t}^{c}\left( \gamma _{t}^{e},s_t \right) \right) +\lambda c_{i}\left( s_t,y^{t-1},\gamma _{t}^{e},b_{t}^{e} \right) +J_{t+1}^{\star}\left( \Phi \left( b_{t}^{e},\gamma _t,s_t \right) \right) \right] dx_t}},
	\end{align}
	with the terminal cost-to-go function $	J_{T+1}^{\star}=0$. The information loss $c_{i}$ is computed via
	\begin{align}
		c_{i}\left( s_t,y^{t-1},\gamma _{t}^{e},b_{t}^{e} \right) =\log \frac{\sum_{z^t}{q _{t}\left( \left. s_t \right|z^t \right) \int{b_{t}^{e}\left( x_t,y^{t-1},z^t \right) dx_t}}}{\left( \sum_{y^{t-1},z^t}{q _{t}\left( \left. s_t \right|z^t \right) \int{b_{t}^{e}\left( x_t,y^{t-1},z^t \right) dx_t}} \right) \left( \sum_{z^t}{\int{b_{t}^{e}\left( x_t,y^{t-1},z^t \right) dx_t}} \right)},
	\end{align}
	and $b^e_t$ is the belief state defined as
	\begin{equation} 
		b_{t}^e\left( x_t,y^{t-1},z^t \right) =p\left( \left. x_t\right|y^{t-1},z^t,U^{t-1} \right) P\left(\left. y^{t-1},z^t\right|S^{t-1},U^{t-1} \right) 
		, 
	\end{equation}
	with $b_{1}^e=p_{\left.x_1\right|z_1}P_{z_1}$ where $p_{\left.X_1\right|Z_1}$ is the observation probability density function of $X_1$ given $Z_1$, and $P_{Z_1}$ is the probability mass function of $Z_1$. The update rule of belief state  is 
	\begin{align}\label{Eq.beliefUpd2}
		b_{t+1}^{e}\left( x_{t+1},y^t,z^{t+1} \right) =\frac{P\left( z_{t+1}|x_{t+1} \right) P\left( y_t|y_{t-1} \right) q _{t}\left( \left. S_t \right|z^t \right) \int{p\left( x_{t+1}|x_t,y_t,\gamma _{t}^{c}\left( \gamma _{t}^{e},S_t \right) \right) b_{t}^{e}\left( x_t,y^{t-1},z^t \right) dx_t}}{\int{\sum_{y^{t-1},z^t}{q _{t}\left( \left. S_t \right|z^t \right) b_{t}^{e}\left( x_t,y^{t-1},z^t \right) dx_t}}}.
	\end{align}
	which is denoted by $b_{t+1}^{e}=\Phi \left( b_{t}^{e},\gamma _t,S_t \right) $.
\end{theorem}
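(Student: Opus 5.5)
We derive the Bellman equation by the common-information approach already set up in Lemma~\ref{Lm.Equivalent}. The starting point is the centralized reformulation \eqref{Eq.OP2}. There, a fictitious coordinator observes only the common information $(S^{t-1},U^{t-1})$ and, at each time $t$, selects the prescription $\gamma_t=\{\gamma_t^e,\gamma_t^c\}$. The quantizer then applies $q_t(\cdot\mid z^t)$ to its private history $z^t$, and the controller applies $u_t=\gamma_t^c(\gamma_t^e,s_t)$.

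The first step is to show that $b_t^e$ is an information state for this coordinator problem. Concretely, we must show that both the expected per-stage cost and the conditional law of the next belief, given $(S^{t-1},U^{t-1})$ and $\gamma_t$, depend on the past only through $b_t^e$ and $\gamma_t$.

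\emph{Per-stage cost.} By the tower property, $\mathsf{E}[c(X_t,U_t)\mid S^{t-1},U^{t-1},\gamma_t]$ is obtained by summing over $s_t,y^{t-1},z^t$ and integrating over $x_t$ the product $q_t(s_t\mid z^t)\,p(x_t,y^{t-1},z^t\mid S^{t-1},U^{t-1})\,c(x_t,\gamma_t^c(\gamma_t^e,s_t))$. This uses the fact that $S_t$ depends on the past only via $Z^t$ through $q_t$ once $\gamma_t$ is fixed. The joint density appearing here is exactly $b_t^e$, since $p(x_t\mid y^{t-1},z^t,S^{t-1},U^{t-1})=p(x_t\mid y^{t-1},z^t,U^{t-1})$: given $z^t$ and $u^{t-1}$, the past indices carry no extra information about $x_t$.

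\emph{Information loss.} Each probability in $c_i$ of Lemma~\ref{Lm.DetCtr} is a marginal of $q_t(s_t\mid z^t)b_t^e$.
\begin{itemize}
\item The numerator $P(s_t,y^{t-1}\mid S^{t-1},U^{t-1})$ is obtained by summing over $z^t$ and integrating over $x_t$.
\item $P(s_t\mid\cdot)$ is obtained by additionally summing over $y^{t-1}$.
\item $P(y^{t-1}\mid\cdot)$ is obtained by summing over $s_t$, which removes $q_t$.
\end{itemize}
This gives the displayed formula for $c_i(s_t,y^{t-1},\gamma_t^e,b_t^e)$.

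\emph{Belief update.} Next we verify \eqref{Eq.beliefUpd2} by Bayes' rule. Write $p(x_{t+1},y^t,z^{t+1},s_t\mid S^{t-1},U^{t-1})$ as the product $P(z_{t+1}\mid x_{t+1})\,P(y_t\mid y_{t-1})\,q_t(s_t\mid z^t)\int p(x_{t+1}\mid x_t,y_t,u_t)\,b_t^e\,dx_t$. This factorization uses the Markov structure of the dynamics \eqref{Eq.Dynamics-state}--\eqref{Eq.Dynamics-obs}, together with the facts that $Y_t$ is independent of $X_t,Z^t,S_t$ given $Y_{t-1}$ and that $U_t$ is a deterministic function of $(\gamma_t,s_t)$ (by Lemma~\ref{Lm.DetCtr}). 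We then condition on $S_t=s_t$ by dividing by the marginal $\sum\int q_t b_t^e$. Conditioning on $U_t$ adds nothing, because $U_t$ is determined by $s_t$ and $\gamma_t$. This yields $\Phi$. The initial condition $b_1^e=p_{x_1|z_1}P_{z_1}$ is immediate, since no common information exists at $t=1$.

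\emph{Backward induction.} With these facts, define $J_{T+1}^\star=0$ and show by backward induction that the optimal cost-to-go from time $t$, conditioned on the common information, equals $J_t^\star(b_t^e)$. The inductive step is the standard argument. Split the total cost into the stage-$t$ cost plus the future cost. Use the induction hypothesis to replace the future cost by $\mathsf{E}[J_{t+1}^\star(b_{t+1}^e)\mid\cdot]$, where the expectation is over $s_t$ weighted by $\sum q_t b_t^e$. Finally, minimize over $\gamma_t$ pointwise in the realization of $(S^{t-1},U^{t-1})$, which is allowed because $\gamma_t$ may depend on it. Combining the three terms under the common weighting $q_t b_t^e$ gives \eqref{Eq.VFEst2}. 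Optimality of the whole problem follows from Lemma~\ref{Lm.Equivalent}.

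\emph{Main obstacle.} The hardest step is legitimizing the measurable-selection and dynamic-programming argument when the per-stage cost $c_i$ is not an expectation of a fixed function of the state. Because $c_i$ depends on $\gamma_t^e$ itself, one must check that the conditional mutual-information term is genuinely a functional of $(b_t^e,\gamma_t^e)$ alone. This is what forces optimization over the whole collection $\gamma_t^e$ rather than pointwise over $z^t$. It also requires confirming that past prescriptions $\gamma^{t-1}$, which are functions of the common information, do not enter the cost beyond $b_t^e$. We would handle this by absorbing them into the conditioning, exactly as in the common-information approach of \cite{Nayyar2013}, and checking that $\Phi$ depends on $\gamma_t$ only through the current prescription.
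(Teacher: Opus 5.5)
Your proposal is correct and follows essentially the same route as the paper. The paper first proves, in its Lemma~\ref{Lm.BeliefStateLoss}, that the conditional control cost, the conditional information loss and the belief update $\Phi$ are all functionals of $(b_t^e,\gamma_t)$, using exactly the marginalizations of $q_t b_t^e$ and the Bayes step you describe; it then runs the same backward induction on $J_t^\star(b_t^e)$, minimizing over $\gamma_t$ pointwise in the realization of $(S^{t-1},U^{t-1})$. Your explicit justification of $p(x_t\mid y^{t-1},z^t,S^{t-1},U^{t-1})=p(x_t\mid y^{t-1},z^t,U^{t-1})$ fills in a conditional-independence step that the paper uses only implicitly.
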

As shown in Theorem~\ref{Th.OPTEQU}, the joint optimization of the policy collections can be formulated as a POMDP with belief state $b_t^e$ and optimal cost-to-go function $J_t^\star$. Based on this formulation, the optimal quantizer and controller can be derived as follows.
\begin{theorem}\label{Th.estCtrPolicies}
	Let $\gamma^{\star}_t\left(b_{t}^e\right)$ denote the solution of \eqref{Eq.VFEst2}. Then, given $\left(Z^t, S^{t-1}, U^{t-1}\right)$, the optimal quantization policy at time $t$ is  $
	\pi_t^{e,\star}\left(\cdot \mid Z^t, b_{t}^e\right) = q_t^{\star}\left(\cdot \mid Z^t\right)$, 
	where $q_t^\star \in \gamma^{e,\star}_t\left(b_{t}^e\right)$, and $b_{t}^e$ is the belief state associated with $\left(S^{t-1}, U^{t-1}\right)$. After sampling $S_t$ from $\pi_t^{e,\star}$, the optimal control action is $U_t^\star = \mu^{\star}_t\left(\gamma_t^{e,\star}(b_t^e), S_t\right)$ with $\mu_t^\star \in \gamma^{c,\star}_t\left(b_{t}^e\right)$, i.e., $U_t^\star = \pi_t^{c,\star}(b_{t}^e,\gamma_t^{e,\star}(b_t^e), S_t)$.
\end{theorem}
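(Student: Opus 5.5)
The plan is to turn the Bellman recursion of Theorem~\ref{Th.OPTEQU} into a prescription for the original policies $(\pi^e,\pi^c)$ of \eqref{Eq.OP}, and to check optimality through the equivalences already established.

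\textbf{Step 1: the belief is a function of common information.} By Lemma~\ref{Lm.Equivalent}, problem \eqref{Eq.OP} has the same optimal value as the centralized problem \eqref{Eq.OP2}. In \eqref{Eq.OP2} a coordinator observes only the common information $(S^{t-1},U^{t-1})$ and chooses the whole collection $\gamma_t=\{\gamma_t^e,\gamma_t^c\}$. Theorem~\ref{Th.OPTEQU} shows that the coordinator's problem is a fully observed MDP on $b_t^e$. The reason is that \eqref{Eq.beliefUpd2} depends only on $b_t^e$, the chosen $\gamma_t$ and the realized $S_t$, while $U_t$ is itself determined by $(\gamma_t,S_t)$. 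Hence, by induction from $b_1^e=p_{x_1|z_1}P_{z_1}$, $b_t^e$ is a deterministic function of $(S^{t-1},U^{t-1})$ once the earlier prescriptions are fixed. The per-stage cost in \eqref{Eq.VFEst2} equals $\mathsf{E}[c(X_t,U_t)+\lambda c_i\mid S^{t-1},U^{t-1}]$, with $c_i$ expressed through $b_t^e$ and $\gamma_t^e$.

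\textbf{Step 2: a Markov coordinator policy is optimal.} I will use a standard backward-induction verification argument for finite-horizon MDPs. For any coordinator strategy, I will show that
\[
\sum_{k\ge t}\mathsf{E}\bigl[c(X_k,U_k)+\lambda c_i\mid S^{t-1},U^{t-1}\bigr]\ge J_t^\star(b_t^e),
\]
with equality when $\gamma_k=\gamma_k^\star(b_k^e)$ for all $k\ge t$. This yields $\min\mathcal{L}(\gamma)=\mathsf{E}[J_1^\star(b_1^e)]$, and the value is attained by the prescription $\gamma_t^\star(b_t^e)$.

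\textbf{Step 3: translate back to $(\pi^e,\pi^c)$.} Using the construction in the proof of Lemma~\ref{Lm.Equivalent}, I map the prescription back to the original policies:
\begin{align*}
\pi_t^{e}(\cdot\mid Z^t,S^{t-1},U^{t-1}) &= q_t^\star(\cdot\mid Z^t),\\
\pi_t^c(S^t,U^{t-1}) &= \mu_t^\star\bigl(\gamma_t^{e,\star}(b_t^e),S_t\bigr),
\end{align*}
where $q_t^\star\in\gamma_t^{e,\star}(b_t^e)$ and $\mu_t^\star\in\gamma_t^{c,\star}(b_t^e)$. These policies are admissible. The quantizer can compute $b_t^e$ because it knows $(S^{t-1},U^{t-1})$. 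The controller can compute both $b_t^e$ and $\gamma_t^{e,\star}(b_t^e)$ from its own information $(S^t,U^{t-1})$. The controller is deterministic, which is consistent with Lemma~\ref{Lm.DetCtr}.

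Under these policies, the joint law of $(X^T,Y^T,Z^T,S^T,U^T)$ coincides with the law induced by the coordinator strategy. It follows that $\mathcal{L}(\pi^{e,\star},\pi^{c,\star})=\mathcal{L}(\gamma^\star)=\min\mathcal{L}$.

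\textbf{Main obstacle.} The hardest step is the equivalence of the information-loss term under this translation. Both $c_i$ and $\Phi$ depend on $\gamma_t^e$ only through the aggregated quantities $\sum_{z^t}q_t(s_t\mid z^t)\,b_t^e(\cdot)$. I therefore have to check that fixing $(S^{t-1},U^{t-1})$ and applying the collection $q_t^\star(\cdot\mid z^t)$ across all $z^t$ reproduces exactly the conditional distributions $P(S_t,Y^{t-1}\mid S^{t-1},U^{t-1})$ and $P(S_t\mid S^{t-1},U^{t-1})$ used in Lemma~\ref{Lm.DetCtr}. I expect this to follow by direct substitution into the aggregation identity displayed after Lemma~\ref{Lm.Equivalent}.

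I also need a measurable selection of the minimizer $\gamma_t^\star(b_t^e)$. I will assume one exists, since the minimization over $\gamma_t^e$ is over a compact product of simplices and the set of control actions $\mathbb{U}$ is discrete.
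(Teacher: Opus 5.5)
Your proposal is correct and follows essentially the paper's route. The paper proves this in one line from Lemma~\ref{Lm.Equivalent} and Theorem~\ref{Th.OPTEQU}, and your Steps 1--3 spell that argument out: the coordinator MDP on $b_t^e$, a backward-induction verification, and the translation back to $(\pi^e,\pi^c)$ via the construction in the proof of Lemma~\ref{Lm.Equivalent}. The ``main obstacle'' you flag is in fact immediate, since the needed expressions for $P(S_t,Y^{t-1}\mid S^{t-1},U^{t-1})$ and $P(S_t\mid S^{t-1},U^{t-1})$ in terms of $q_t$ and $b_t^e$ are exactly those derived in Lemma~\ref{Lm.BeliefStateLoss}.
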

\begin{proof}
	The result follows directly from Lemma \ref{Lm.Equivalent} and Theorem \ref{Th.OPTEQU}.
\end{proof}
Similarly, we can derive the optimality equations for the auxiliary optimization problem \eqref{Eq.OP3} as follows.
\begin{theorem} \label{Th.OPTEQU2}
	The Bellman optimality equations for the auxiliary optimization problem \eqref{Eq.OP3} are given by 
	\begin{equation} \label{Eq.VFEst}
		\begin{aligned}
			\tilde{J}_{t}^{e,\star}\left( b_{t}^{e} \right) =\min_{\gamma _{t}^{e}} \sum_{s_t,y^{t-1},z^t}{\int{q _{t}\left( \left. s_t \right|z^t \right) b_{t}^{e}\left( x_t,y^{t-1},z^t \right) \left[\lambda c_{i}\left( s_t,y^{t-1},\gamma _{t}^{e},b_{t}^{e} \right) +\tilde{J}_{t}^{c,\star}\left( \Phi ^c\left( b_{t}^{e},\gamma _{t}^e,s_t \right) \right) \right] dx_t}},
		\end{aligned}
	\end{equation} 
	\begin{equation}\label{Eq.VFCtr}
		\tilde{J}_{t}^{c,\star}\left( b_{t}^c \right) =\min_{U_t} \int{\sum_{y^{t-1},z^t}{b_{t}^c\left( x_t,y^{t-1},z^t \right) c\left( x_t,U_t \right)}dx_t}+\tilde{J}_{t+1}^{e,\star}\left( \Phi ^e\left( b_{t}^c,U_t \right) \right),
	\end{equation}
	where $b_{t}^e$ and $b_{t}^c$ are belief states defined as,
	\begin{equation} \label{Eq.encBelief}
		b_{t}^e\left( x_t,y^{t-1},z^t \right) =p\left( \left. x_t\right|y^{t-1},z^t,U^{t-1} \right) P\left(\left. y^{t-1},z^t\right|S^{t-1},U^{t-1} \right) 
		, 
	\end{equation}
	\begin{equation} \label{Eq.ctrBelief}
		b_{t}^c\left( x_t,y^{t-1},z^t \right) =p\left(\left. x_t\right|y^{t-1},z^t,U^{t-1} \right) P\left(\left. y^{t-1},z^t\right|S^t,U^{t-1} \right) 
		, \nonumber
	\end{equation}
	and $\tilde{J}_{t}^{e,\star}\left( b_{t}^e \right)$ and $\tilde{J}_{t}^{c,\star}\left( b_{t}^c \right)$ are optimal cost-to-go functions associated with $b_{t}^e$ and $b_{t}^c$. The terminal cost-to-go function is $\tilde{J}_{T}^{c,\star}\left( b_{T}^{c} \right) =\sum_{y^{T-1},z^T}{\int{b_{T}^{c}\left( x_T,y^{T-1},z^T \right) c\left( x_T \right) dx_t}}
	$. 
	The belief states are recursively computed using the following forward equations,
	\begin{equation} \label{Eq.BSUPEst}
		b_{t}^c\left( x_t,y^{t-1},z^t \right) =\frac{q_t\left(\left. S_t\right|z^t\right) b_{t}^e\left( x_t,y^{t-1},z^t \right)}{\sum_{y^{t-1},z^t}{q_t\left(\left. S_t\right|z^t\right) \int{b_{t}^e\left( x_t,y^{t-1},z^t \right) dx_t}}},
	\end{equation}
	\begin{equation} \label{Eq.BSUPCtr}
		b_{t+1}^{e}\left( x_{t+1},y^t,z^{t+1} \right) =\int{P\left( \left. z_{t+1} \right|x_{t+1} \right) p\left( \left. x_{t+1} \right|x_t,y_t,U_t \right) P\left( \left. y_t \right|y_{t-1} \right) b_{t}^{c}\left( x_t,y^{t-1},z^t \right) dx_t},
	\end{equation}  
	with $b_{1}^e=p_{\left.x_1\right|z_1}P_{z_1}$ and $b_{1}^c=p_{\left.x_1\right|z_1}P_{\left.z_1\right|S_1}$. The update rules are denoted with $b_{t+1}^e=\Phi ^e\left( b_{t}^c,U_{t} \right)$ and $b_{t}^c=\Phi ^c\left( b_{t}^e,\gamma _{t}^e,S_t \right)$.
\end{theorem}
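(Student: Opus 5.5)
The plan is to derive Theorem~\ref{Th.OPTEQU2} from Theorem~\ref{Th.OPTEQU}. Each stage $t$ of \eqref{Eq.OP2} will be split into two sequential sub-stages: a quantizer stage that chooses $\gamma_t^e$ given $(S^{t-1},U^{t-1})$, and a controller stage that chooses $U_t$ given $(S^t,U^{t-1})$. This splitting is legitimate by Lemma~\ref{Lm.Equivalent}, which already shows that \eqref{Eq.OP3} has the same value as \eqref{Eq.OP2} and \eqref{Eq.OP}. By Lemma~\ref{Lm.DetCtr}, the controller may be taken deterministic. Hence, once $\gamma_t^e$ and the realized $s_t$ are fixed, choosing the map $\mu_t(\gamma_t^e,\cdot)$ is the same as choosing, for each $s_t$ separately, a single action $U_t\in\mathbb{U}$. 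Because $U_t$ enters neither $c_i(s_t,y^{t-1},\gamma_t^e,b_t^e)$ nor the weights $q_t(s_t|z^t)b_t^e$, the minimization over $\gamma_t^c$ in \eqref{Eq.VFEst2} decouples across $s_t$. It can therefore be moved inside the sum over $s_t$, yielding a nested min--sum--min structure.

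The main work is to identify the intermediate belief and its updates. First I will show that $\Phi$ in \eqref{Eq.beliefUpd2} factors as $\Phi(b_t^e,\gamma_t,s_t)=\Phi^e(\Phi^c(b_t^e,\gamma_t^e,s_t),\gamma_t^c(\gamma_t^e,s_t))$. Here $\Phi^c$ is the Bayes conditioning \eqref{Eq.BSUPEst}: multiplying $P(y^{t-1},z^t|S^{t-1},U^{t-1})$ by the likelihood $q_t(s_t|z^t)$ and normalizing gives $P(y^{t-1},z^t|S^t,U^{t-1})$. The factor $p(x_t|y^{t-1},z^t,U^{t-1})$ is unchanged because $S_t$ depends on $X_t$ only through $z^t$ given the common information. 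The map $\Phi^e$ is the prediction step \eqref{Eq.BSUPCtr}. It uses the kernel \eqref{Eq.Dynamics-state}, the Markov chain for $Y_t$, and the observation model \eqref{Eq.Dynamics-obs}. There is no normalization, because $U_t$ is a deterministic function of $(S^t,U^{t-1})$ and so conditioning on it carries no new information. Substituting \eqref{Eq.BSUPEst} into \eqref{Eq.BSUPCtr} and checking that the normalizing constant equals the denominator of \eqref{Eq.beliefUpd2} confirms the factorization.

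Next, I rewrite the expected stage cost. Using \eqref{Eq.BSUPEst},
\[
\sum_{y^{t-1},z^t}\int q_t(s_t|z^t)b_t^e\,c(x_t,u_t)\,dx_t
=P(s_t|S^{t-1},U^{t-1})\sum_{y^{t-1},z^t}\int b_t^c\,c(x_t,u_t)\,dx_t ,
\]
and the same weight $P(s_t|S^{t-1},U^{t-1})$ multiplies $J_{t+1}^\star$. I then define $\tilde J_t^{c,\star}(b_t^c)$ as the minimum over $U_t$ of the bracket in \eqref{Eq.VFCtr}, and set $\tilde J_{t+1}^{e,\star}:=J_{t+1}^\star$. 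With these definitions, \eqref{Eq.VFEst2} becomes \eqref{Eq.VFEst}: the term $P(s_t|\cdot)\tilde J_t^{c,\star}(b_t^c)$ is rewritten back as $\sum\int q_t b_t^e\,\tilde J_t^{c,\star}(\Phi^c(\cdot))$, which holds because $\tilde J^{c,\star}$ does not depend on $(x_t,y^{t-1},z^t)$ and the weights sum to $P(s_t|\cdot)$.

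The equations are then justified by backward induction on $t$, starting from $\tilde J_{T+1}^{e,\star}=0$. At $t=T$ the controller stage carries only the terminal cost $c(x_T)$, which gives the stated $\tilde J_T^{c,\star}$. At each step the induction hypothesis identifies $\tilde J_{t+1}^{e,\star}$ with $J_{t+1}^\star$, and the decoupling argument above closes the step. The initial beliefs follow from $P(z_1)$, together with Bayes' rule applied to $q_1$ to obtain $P(z_1|S_1)$.

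I expect the main obstacle to be the interchange of $\min_{\gamma_t^c}$ with the sum over $s_t$. This step needs measurable selection of minimizers as functions of $(\gamma_t^e,s_t)$. Since $\mathbb{U}$ and $\mathbb{S}$ are finite, an argmin with a fixed tie-breaking rule is measurable, and I would invoke that. A second technical point is verifying that the $x_t$-marginal factor of the belief is unaffected by conditioning on $S_t$; this uses that $q_t$ depends on $z^t$ only.
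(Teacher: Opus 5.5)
Your argument is correct, but it follows a different route from the paper's.

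\textbf{What the paper does.} The paper never passes through Theorem~\ref{Th.OPTEQU}. It proves Lemma~\ref{Lm.BeliefStateLoss2}: the Bayes step $\Phi^c$, the prediction step $\Phi^e$, and the representation of $\mathsf{E}[c(X_t,U_t)\mid S^t,U^{t-1}]$ as a functional of $b_t^c$. It then runs a direct backward induction on the two-phase problem \eqref{Eq.OP3}. There, $\tilde J_t^{c,\star}$ and $\tilde J_t^{e,\star}$ are defined as conditional optimal costs given $(S^t,U^{t-1})$ and $(S^{t-1},U^{t-1})$. At each half-step the paper shows they depend on the history only through $b_t^c$ and $b_t^e$, respectively.

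\textbf{What you do.} You take \eqref{Eq.VFEst2} as given and verify $\Phi=\Phi^e\circ\Phi^c$ by substitution. You then push $\min_{\gamma_t^c}$ inside the sum over $s_t$. This is valid because $\gamma_t^c$ enters the stage-$t$ integrand, including $J_{t+1}^\star(\Phi(\cdot))$, only through the value $\mu_t(\gamma_t^e,s_t)$, and these values can be chosen independently for each $s_t$. Finally, you factor out $P(s_t\mid S^{t-1},U^{t-1})$ to recognize $\tilde J_t^{c,\star}(b_t^c)$.

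\textbf{What your route buys.} It makes explicit several points the paper leaves implicit:
\begin{itemize}
\item the identification $\tilde J_t^{e,\star}=J_t^\star$;
\item a direct algebraic check that the composed update reproduces \eqref{Eq.beliefUpd2}, which the paper uses in Section~\ref{Sec:Stru};
\item a justification for minimizing over $U_t$ pointwise in $s_t$, even though in \eqref{Eq.OP3} the collection $\gamma_t^c$ is fixed together with $\gamma_t^e$ at time $t$. The paper covers this step with ``proceeding in the same manner.''
\end{itemize}

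\textbf{What the paper's route buys.} It is self-contained and does not inherit the proof of Theorem~\ref{Th.OPTEQU}. It also gives $\tilde J_t^{c,\star}$ and $\tilde J_t^{e,\star}$ their direct meaning as cost-to-go functions at the controller's and quantizer's information sets, rather than obtaining them through the value equivalence of Lemma~\ref{Lm.Equivalent}.

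\textbf{Minor edge case.} When $P(s_t\mid S^{t-1},U^{t-1})=0$, the belief $\Phi^c(b_t^e,\gamma_t^e,s_t)$ is undefined. That term carries zero weight, so any convention works.
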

\begin{proof}
	See Appendix \ref{App:Th.OPTEQU}.
\end{proof}
Given Theorem \ref{Th.OPTEQU2}, the quantization policy collection $\gamma^e_t$ and the controller policy $\pi_t^c$ can be optimized via the coupled backward optimization equations \eqref{Eq.VFEst} and \eqref{Eq.VFCtr}. The following theorem characterizes the optimal quantization policy based on Theorem \ref{Th.OPTEQU2}.

\begin{theorem}\label{Theo: EstPol2}
	Let $\gamma^{e,\star}_t\left(b_{t}^e\right)$ denote the solution of \eqref{Eq.VFEst}. Then, given $\left(Z^t, S^{t-1}, U^{t-1}\right)$, the optimal quantization policy at time $t$ is $
	\pi_t^{e,\star}\left(\cdot \mid Z^t, b_{t}^e\right) = q_t^{\star}\left(\cdot \mid Z^t\right) \in \gamma^{e,\star}_t\left(b_{t}^e\right)$, 
	where $b_{t}^e$ is the belief state associated with $\left(S^{t-1}, U^{t-1}\right)$. 
\end{theorem}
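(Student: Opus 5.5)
The plan is to follow the same route as Theorem~\ref{Th.estCtrPolicies}, replacing the centralized problem \eqref{Eq.OP2} by the sequential formulation \eqref{Eq.OP3} and its coupled optimality equations from Theorem~\ref{Th.OPTEQU2}.

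First, by Lemma~\ref{Lm.Equivalent}, any optimal pair $(\gamma^{e,\star},\gamma^{c,\star})$ for \eqref{Eq.OP3} yields an optimal pair $(\pi^{e},\pi^c)$ for \eqref{Eq.OP}. The realization rule is that at each $(S^{t-1},U^{t-1})$ the quantizer applies the member $q_t(\cdot\mid Z^t)$ of the selected collection $\gamma_t^e(S^{t-1},U^{t-1})$. It therefore suffices to show that an optimal collection for \eqref{Eq.OP3} can be chosen to depend on $(S^{t-1},U^{t-1})$ only through $b_t^e$.

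Second, I would use Theorem~\ref{Th.OPTEQU2} and argue by backward induction on $t$.
\begin{itemize}
\item The terminal value $\tilde J^{c,\star}_T$ depends only on $b_T^c$.
\item If $\tilde J^{e,\star}_{t+1}$ depends only on $b_{t+1}^e$, then \eqref{Eq.VFCtr} shows that $\tilde J^{c,\star}_t$ depends only on $b_t^c$, because $\Phi^e$ in \eqref{Eq.BSUPCtr} uses only $b_t^c$ and $U_t$.
\item In \eqref{Eq.VFEst}, the stage term $c_i(s_t,y^{t-1},\gamma_t^e,b_t^e)$ depends only on $(\gamma_t^e,b_t^e)$. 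The continuation value enters through $\Phi^c(b_t^e,\gamma_t^e,s_t)$ in \eqref{Eq.BSUPEst}, which is also determined by $(b_t^e,\gamma_t^e,s_t)$. Hence the right-hand side of \eqref{Eq.VFEst} is a function of $(b_t^e,\gamma_t^e)$ alone, and a minimizer $\gamma_t^{e,\star}(b_t^e)$ can be chosen as a function of $b_t^e$.
\end{itemize}
The standard verification argument then applies. The expected cost of any policy collection is bounded below by $\mathsf{E}[\tilde J_1^{e,\star}(b_1^e)]$, and equality is attained by the belief-measurable minimizers. This is essentially the same telescoping sum used for Theorem~\ref{Th.OPTEQU2}.

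Third, I would read off the policy. Given $(Z^t,S^{t-1},U^{t-1})$, compute $b_t^e$ recursively from $(S^{t-1},U^{t-1})$ via \eqref{Eq.BSUPEst}–\eqref{Eq.BSUPCtr}. Then pick the element of $\gamma_t^{e,\star}(b_t^e)$ indexed by the realized history $Z^t$, namely $q_t^\star(\cdot\mid Z^t)$, and set $\pi_t^{e,\star}(\cdot\mid Z^t,b_t^e)=q_t^\star(\cdot\mid Z^t)$. This prescription is consistent because every quantizer history sharing $(S^{t-1},U^{t-1})$ uses the same collection. That consistency is exactly what makes the aggregated term $P(S_t,Y^{t-1}\mid S^{t-1},U^{t-1})$ in $c_i$ match the one computed from $\gamma_t^{e,\star}$.

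The main obstacle is this last consistency and measurability point. I need the minimizer in \eqref{Eq.VFEst} to exist, or an $\epsilon$-optimal measurable selection to exist, over the space of collections $\gamma_t^e$. I also need that evaluating the collection at the true $Z^t$ reproduces the belief evolution assumed in Theorem~\ref{Th.OPTEQU2}. For existence, I would first note that $\gamma_t^e$ is a finite product of simplices when $\mathbb{Z}$ is finite, so it is compact. The objective is continuous in $\gamma_t^e$ because $c_i$ involves $\log$ of sums that are continuous away from zero, with the convention $0\log 0=0$. Existence of a minimizer then follows from Weierstrass, and a measurable selector in $b_t^e$ follows from a standard measurable selection theorem.
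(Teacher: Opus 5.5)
Your proposal is correct and follows the paper's route. The paper proves this in one line by combining Lemma~\ref{Lm.Equivalent} with Theorem~\ref{Th.OPTEQU2}: an optimal collection for \eqref{Eq.OP3}, realized by applying $q_t(\cdot\mid Z^t)$ from the collection attached to $(S^{t-1},U^{t-1})$, gives a policy for \eqref{Eq.OP} with the same optimal cost, and the minimizing collection depends on the history only through $b_t^e$. Your backward induction and read-off step unpack exactly this.

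The existence and measurable-selection step you flag as the main obstacle is not needed, because the statement already presupposes the minimizer $\gamma_t^{e,\star}(b_t^e)$ of \eqref{Eq.VFEst}. If you did want that step, your continuity argument covers only the $c_i$ term. It does not cover the continuation term $\sum_{s_t}P(s_t\mid\cdot)\,\tilde{J}_t^{c,\star}(\Phi^c(b_t^e,\gamma_t^e,s_t))$, so it would need more work.
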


\begin{proof}
	The result follows directly from Lemma \ref{Lm.Equivalent} and Theorem \ref{Th.OPTEQU2}.
\end{proof}

Based on the dynamic programming decomposition in Theorem \ref{Theo: EstPol2}, the optimal quantizer can be obtained by first decomposing the available information $\left(Z^t, S^{t-1}, U^{t-1}\right)$ into two components, $\left(S^{t-1}, U^{t-1}\right)$ and $Z^t$. The collection of quantization policies is optimized using $\left(S^{t-1}, U^{t-1}\right)$, and the final optimal quantization policy is selected based on the current measurement sequence $Z^t$.

\subsection{Structural Properties of the Optimal Privacy-Aware Design}
As established in Theorems~\ref{Th.estCtrPolicies} and \ref{Theo: EstPol2}, the quantizer’s probabilistic knowledge based on $\left(S^{t-1},U^{t-1}\right)$ is characterized by the belief state $b_t^e$, whose evolution depends on the realization of $S_t$. Since the system is non-Gaussian, $b_t^e$ is generally non-Gaussian, and its evolution cannot be determined in advance because of the stochastic nature of $S_t$. In other words, the optimal quantizer stochastically influences the evolution of $b_t^e$ through its dependence on $S_t$, while simultaneously selecting the corresponding policy distribution based on the measurement realization $Z_t$ and the belief state $b_t^e$. Consequently, the optimal privacy-aware quantizer design is a stochastic closed-loop control problem in which the state is given by the belief state $b_t^e$.

We illustrate the structural properties of the privacy-aware quantizer and controller in Fig. \ref{Fig.EstStruCmp}(a) and (b), based on Theorems \ref{Th.estCtrPolicies} and \ref{Theo: EstPol2}, respectively. As shown in Fig. \ref{Fig.EstStruCmp}(a), the optimal quantizer characterized in Theorem \ref{Th.estCtrPolicies} generates the index $S_t$ using the belief state $b_t^e$, updated according to \eqref{Eq.beliefUpd2}, together with the measurement history $Z^t$. The controller then uses $S_t$ and $b_t^e$ to generate the control input $U_t$ for regulating the system state $X_t$. Similarly, in Fig. \ref{Fig.EstStruCmp}(b), the quantizer characterized in Theorem \ref{Theo: EstPol2} updates its belief state $b_t^e$ according to \eqref{Eq.BSUPCtr} and transmits the index $S_t$ to the controller, while the controller updates its own belief according to \eqref{Eq.BSUPEst} and generates the control signal $U_t$. Since $b_{t+1}^e$ evolves according to
$b_{t+1}^{e}=\Phi \left( b_{t}^{e},\gamma _t,S_t \right)$ and
$b_{t+1}^{e}=\Phi ^e\left( \Phi ^c\left( b_{t}^{e},\gamma _{t}^{e},S_t \right) ,U_t \right)$, the quantizer effectively controls the evolution of $b_t^e$ through feedback.

Moreover, given the quantization indices and control inputs $\left(S^t,U^t\right)$, the adversary can form a belief about the private input through the posterior distribution
\begin{equation} \label{Eq.ADVINF}
	b_{t}^{a}\left( y^t \right) =P\left( y^t\mid S^t,U^t \right) =\int{\sum_{Z^{t+1}}{b_{t+1}^{e}\left( x_{t+1},y^t,z^{t+1} \right) dx_{t+1}}},
\end{equation}
which enables it to infer the private information. Since $b_{t+1}^e$ is governed by the feedback quantization policy collection $\gamma_t^e(b_t^e)$, the adversary’s belief is also shaped by the quantizer in a closed-loop manner through the belief update dynamics.

Furthermore, setting $\lambda=0$ in \eqref{Eq.OP} eliminates the privacy-leakage penalty, so that only control performance is taken into account. As a result, the quantizer is no longer required, and the privacy-unaware design shown in Fig.~\ref{Fig.EstStruCmp}(c) is recovered via standard POMDP methods \cite{krishnamurthy2016partially}. In this case, the belief states of both the controller and the adversary evolve passively, without inner-loop regulation by the quantizer. By contrast, the privacy-aware design introduces an additional degree of freedom for regulating the information flow from the measurements to the control actions, thereby allowing the quantizer to actively reduce privacy leakage.

\begin{figure*}[!htpb]
	\centering
	\subfigure[]{
		\centering
		\includegraphics[width=0.42\textwidth]{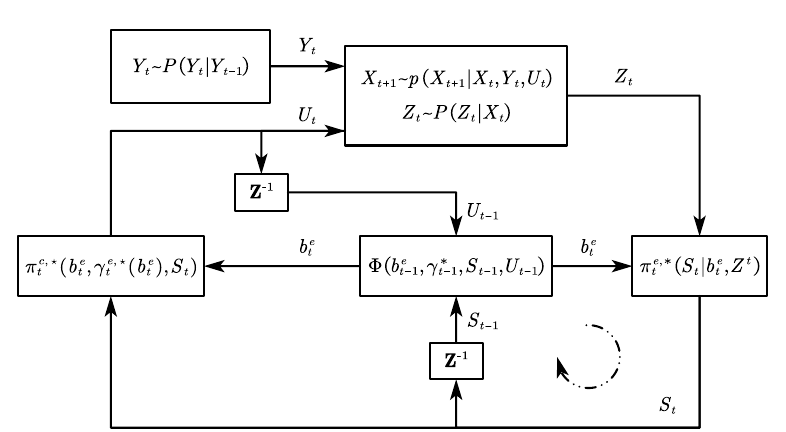}
		\label{Fig.EstStruCmpA}
	}
	\subfigure[]{
		\centering
		\includegraphics[width=0.4\textwidth]{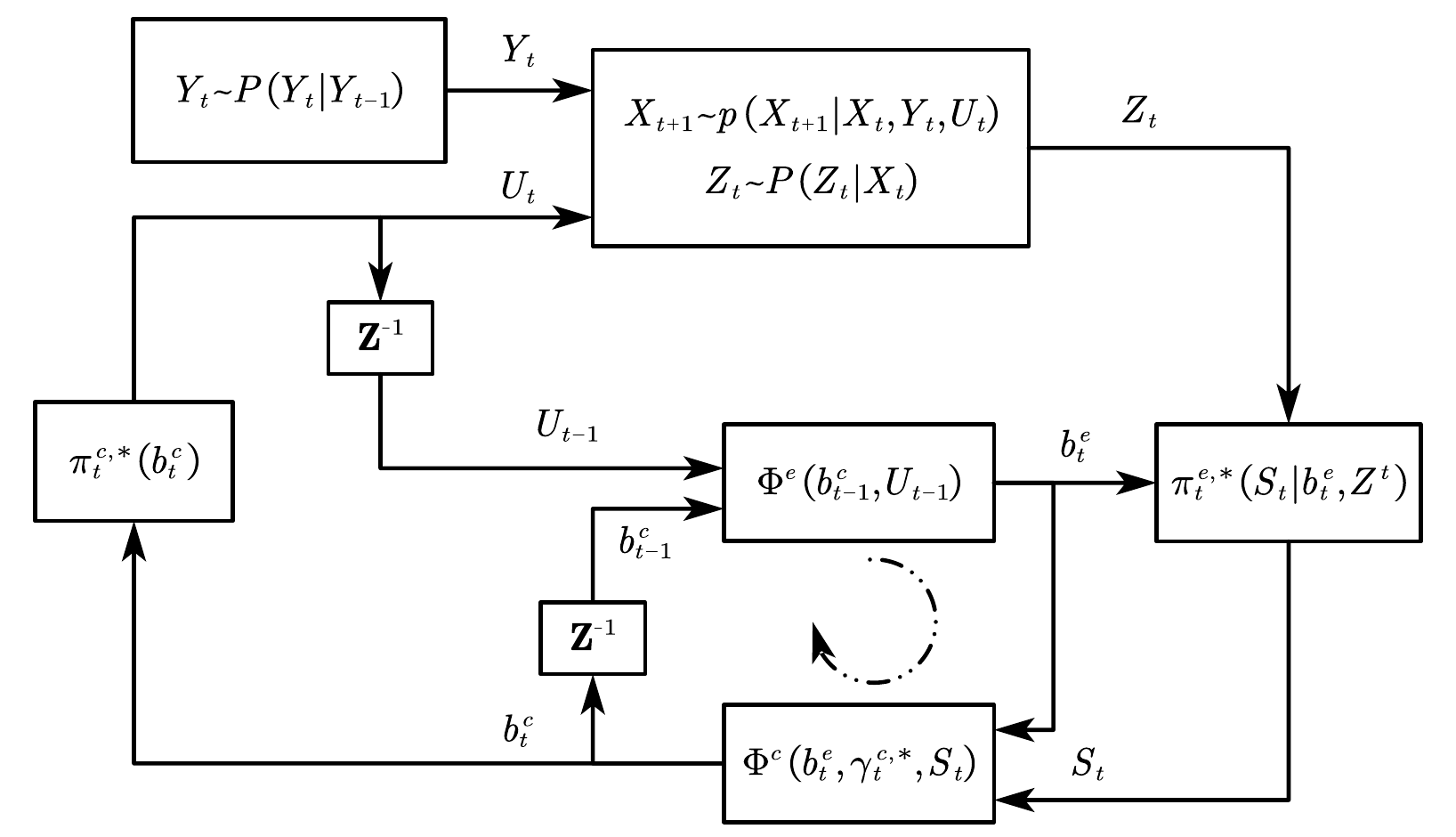}
		\label{Fig.EstStruCmpB}
	}
	\\
	\centering
	\subfigure[]{
		\centering
		\includegraphics[width=0.4\textwidth]{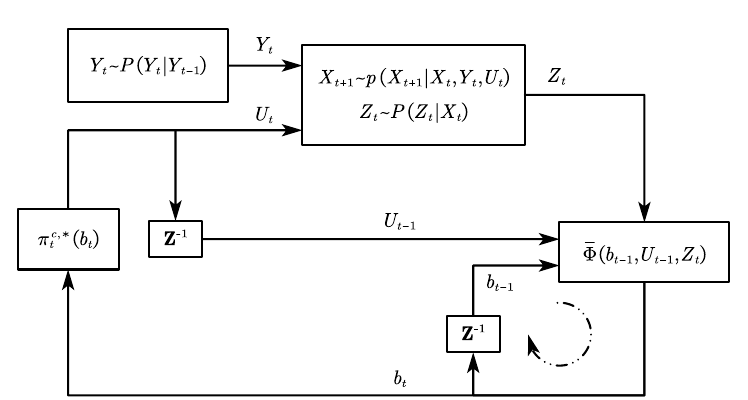}
		\label{Fig.EstStruCmpC}
	}
%	\subfigure[]{
%		\centering
%		\includegraphics[width=0.4\textwidth]{sysStructureGaussian.pdf}
%		\label{Fig.EstStruCmpD}
%	}
	\caption{Structures of networked control systems: (a) privacy-aware design in Theorem \ref{Th.OPTEQU}; (b)  privacy-aware design in Theorem \ref{Th.OPTEQU2}; (c) privacy-unaware design.}% (d) optimal encoder for linear and Gaussian setting.}
	\label{Fig.EstStruCmp}
\end{figure*}
\section{Numerical Optimization}\label{Sec:Comp}
Although the optimal privacy-aware design can, in principle, be obtained through the dynamic programming decomposition techniques developed in Theorems~\ref{Th.OPTEQU} and \ref{Th.OPTEQU2}, exact optimization remains intractable for two main reasons: i) the difficulty of computing the information leakage term $c_{i}\left( S_t,Y^{t-1},S^{t-1},U^{t-1} \right)$, which depends on all possible past measurement realizations $Z^t$, and ii) the curse of dimensionality arising from the continuous state space. To overcome these challenges, we next develop a practical numerical approach for optimizing the privacy-aware quantizer-controller co-design.

\subsection{Policy Gradient Optimization for Joint Design}
Consider the stochastic joint policy $\pi_t\left( \left. U_t,S_t \right|Z^t, S^{t-1},U^{t-1} \right)
=
\pi_t^c\left( \left. U_t \right|S^t,U^{t-1} \right)
\pi_t^e\left( \left. S_t \right|Z^t, S^{t-1},U^{t-1} \right)$. 
We parameterize this joint policy as $
\pi_{\theta}\left( \left. U_t,S_t \right|Z^t, S^{t-1},U^{t-1} \right)$, which can be implemented using a recurrent neural network of the form $
\pi_{\theta}\left( \left. U_t,S_t \right|h_{\theta,t} \right)$, where the recurrent hidden state evolves according to $
h_{\theta,t}=f_{\theta}\left( h_{\theta,t-1}, Z_t, S_{t-1},U_{t-1} \right)$. Under this parameterization, the objective function in \eqref{Eq.OP} can be written as
\begin{align}
	L_{\theta}\left( \pi_{\theta} \right)
	=
	\mathsf{E}_{\theta}\left[
	\sum_{t=1}^T
	\Big(
	c\left( X_t,U_t \right)
	+\lambda c_{i,\theta}\left( S_t,Y^{t-1},S^{t-1},U^{t-1} \right)
	\Big)
	\right].
\end{align}
Since the information leakage term $c_{i,\theta}$ depends on the quantization policy $\pi_\theta^e$, as discussed in Sec.~\ref{Sec:Stru}, it also depends on the parameter $\theta$. This feature differentiates the proposed formulation from standard reinforcement learning problems \cite{Otterlo2012}. We next derive the gradient of $L_{\theta}\left( \pi_{\theta} \right)$ with respect to $\theta$ for numerical optimization.
\begin{theorem}\label{Th.gradientObj}
	The gradient of $L_{\theta}\left( \pi_{\theta} \right)$ with respect to $\theta$ is
	\begin{align}\label{Eq.gradObj}
		\nabla_{\theta}L_{\theta}\left( \pi_{\theta} \right)
		=
		\mathsf{E}_{\theta}\left[
		\left(
		\sum_{t=1}^T
		\Big(
		c\left( X_t,U_t \right)
		+\lambda c_{i,\theta}\left( S_t,Y^{t-1},S^{t-1},U^{t-1} \right)
		\Big)
		\right)
		\left(
		\sum_{t=1}^T
		\nabla_{\theta}\log
		\pi_{\theta}\left( \left. U_t,S_t \right|Z^t, S^{t-1},U^{t-1} \right)
		\right)
		\right].
	\end{align}
\end{theorem}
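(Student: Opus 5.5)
The plan is to apply the score-function (likelihood-ratio) identity to the full trajectory distribution, and then show that the extra term coming from the explicit $\theta$-dependence of $c_{i,\theta}$ has zero mean.

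\textbf{Step 1: trajectory factorization.} Let $\tau=(X^T,Y^T,Z^T,S^T,U^T)$ denote a full trajectory. By the model \eqref{Eq.Dynamics-state}--\eqref{Eq.Dynamics-obs} and the joint policy, its density factorizes as
\begin{align*}
p_\theta(\tau)=\prod_{t=1}^T P(y_t\mid y_{t-1})\,p(x_t\mid x_{t-1},y_{t-1},u_{t-1})\,P(z_t\mid x_t)\,\pi_\theta(u_t,s_t\mid z^t,s^{t-1},u^{t-1}),
\end{align*}
with the usual convention at $t=1$. Only the policy factors depend on $\theta$. Hence
\begin{align*}
\nabla_\theta\log p_\theta(\tau)=\sum_{t=1}^T\nabla_\theta\log\pi_\theta(u_t,s_t\mid z^t,s^{t-1},u^{t-1}).
\end{align*}

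\textbf{Step 2: differentiate under the integral.} Write $L_\theta=\int p_\theta(\tau)F_\theta(\tau)\,d\tau$, where
\begin{align*}
F_\theta(\tau)=\sum_{t=1}^T\big(c(x_t,u_t)+\lambda c_{i,\theta}(s_t,y^{t-1},s^{t-1},u^{t-1})\big).
\end{align*}
Assuming enough regularity to interchange $\nabla_\theta$ and the integral (the action and index sets are finite and $\pi_\theta$ is smooth and positive), the product rule gives
\begin{align*}
\nabla_\theta L_\theta=\mathsf{E}_\theta\big[F_\theta\,\nabla_\theta\log p_\theta(\tau)\big]+\lambda\sum_{t=1}^T\mathsf{E}_\theta\big[\nabla_\theta c_{i,\theta}\big].
\end{align*}
By Step 1, the first term is exactly the right-hand side of \eqref{Eq.gradObj}. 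It therefore remains to show that the second term vanishes.

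\textbf{Step 3: the leakage-dependence term has zero mean.} This is the crux, and the feature that distinguishes the problem from standard reinforcement learning. Write
\begin{align*}
c_{i,\theta}=\log P_\theta(S_t,Y^{t-1}\mid S^{t-1},U^{t-1})-\log P_\theta(S_t\mid S^{t-1},U^{t-1})-\log P_\theta(Y^{t-1}\mid S^{t-1},U^{t-1}).
\end{align*}
All three conditional distributions are marginals or conditionals of the same $p_\theta(\tau)$ under which the expectation is taken. For any such pair $(A,B)$, the score identity gives
\begin{align*}
\mathsf{E}_\theta[\nabla_\theta\log P_\theta(A\mid B)]=\sum_b P_\theta(b)\sum_a\nabla_\theta P_\theta(a\mid b)=\sum_b P_\theta(b)\,\nabla_\theta 1=0.
\end{align*}
Applying this with $A=(S_t,Y^{t-1})$, $A=S_t$ and $A=Y^{t-1}$, each with $B=(S^{t-1},U^{t-1})$, shows that $\mathsf{E}_\theta[\nabla_\theta c_{i,\theta}]=0$ for every $t$. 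The step needing care is that $P_\theta(Y^{t-1}\mid S^{t-1},U^{t-1})$ also depends on $\theta$ through the past policies, but the identity does not require $\theta$-independence. It only requires that the expectation is taken under the same $\theta$-dependent joint law, which holds here. Summing over $t$ completes the proof of \eqref{Eq.gradObj}.
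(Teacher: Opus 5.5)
Your proposal is correct and follows the paper's proof essentially step for step. You use the score-function identity on the trajectory law, the factorization showing that only the policy factors carry $\theta$, and the fact that $\mathsf{E}_\theta[\nabla_\theta c_{i,\theta}]=0$ because each log-conditional has zero expected score under its own $\theta$-dependent law. The differences are cosmetic: you split $c_{i,\theta}$ into three log-conditionals given $(S^{t-1},U^{t-1})$, whereas the paper writes it as $\log P_\theta(S_t\mid Y^{t-1},S^{t-1},U^{t-1})-\log P_\theta(S_t\mid S^{t-1},U^{t-1})$ and shows two terms vanish, and you state explicitly the regularity needed to interchange gradient and expectation, which the paper leaves implicit.
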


\begin{proof}
	See Appendix \ref{App:Th.gradientObj}.
\end{proof}

Based on Theorem \ref{Th.gradientObj}, the gradient $\nabla_{\theta}L_{\theta}\left( \pi_{\theta} \right)$ can be approximated using Monte Carlo methods based on collected data trajectories $\tau=\left( Y^T,X^T,Z^T,S^T,U^{T} \right)$, provided that the policy evaluation terms $c$ and $c_i$ can be computed.
\subsection{Information Loss Computation via Binary Classification} \label{Sec.InfApp}
To update the controller and the quantizer, we need to compute the information loss
\begin{equation} \label{Eq.likeliRatio}
	c_{i}\big(S_t,Y^{t-1},S^{t-1},U^{t-1}\big)
	=
	\log \frac{P\big( S_t \mid Y^{t-1},S^{t-1},U^{t-1} \big)}{P\big( S_t \mid S^{t-1},U^{t-1} \big)},
\end{equation}
which depends on the entire quantization policy sequence and becomes computationally intractable for long time horizons or high-dimensional measurement spaces. To avoid the exact evaluation of \eqref{Eq.likeliRatio}, we employ a binary classification approach motivated by \cite{molavipour2021neural}.

Let $\tilde{S}_t$ be a discrete uniform random variable sharing the same support as $S_t$, independent of $S^T$, $U^{T}$, and $Y^T$. We define a binary random variable $M_t \in \{0,1\}$ with prior probabilities $P(M_t=0)=P(M_t=1)=0.5$, and introduce the auxiliary random variable  
\begin{equation}
	\bar{S}_t = M_t S_t + (1-M_t) \tilde{S}_t,
\end{equation}  
which distinguishes between $S_t$ and $\tilde{S}_t$ depending on $M_t$. The following lemma shows that the likelihood ratio \eqref{Eq.likeliRatio} can be obtained via binary classification.

\begin{lemma} \label{Lm.InfCmp}
	The likelihood ratio can be expressed as
	\begin{align} \label{Eq.likeRatio}
		c_{i}\big(S_t,Y^{t-1},S^{t-1},U^{t-1}\big) 
		= \log \frac{w_t^\star(S_t,Y^{t-1},S^{t-1},U^{t-1})}{1 - w_t^\star(S_t,Y^{t-1},S^{t-1},U^{t-1})} 
		- \log \frac{\xi_t^\star(S_t,S^{t-1},U^{t-1})}{1 - \xi_t^\star(S_t,S^{t-1},U^{t-1})},
	\end{align}
	where $w_t^\star$ and $\xi_t^\star$ are the optimal solutions of the following binary classification problems
	\begin{align} \label{Eq.likeRatioOP1} 
		\max_{\{w(\cdot )\}_{t=1}^{T}} \mathsf{E}\Biggl[ \sum_{t=1}^T{\bigl( M_t\log w_t(\bar{S}_t,Y^{t-1},S^{t-1},U^{t-1})+\left( 1-M_t \right) \log \bigl( 1-w_t(\bar{S}_t,Y^{t-1},S^{t-1},U^{t-1}) \bigr) \bigr)} \Biggr] ,
	\end{align}
	and
	\begin{align} \label{Eq.likeRatioOP2}
		\max_{\{\xi (\cdot )\}_{t=1}^{T}} \mathsf{E}\Biggl[ \sum_{t=1}^T{\bigl( M_t\log \xi _t(\bar{S}_t,S^{t-1},U^{t-1})+\left( 1-M_t \right) \log \bigl( 1-\xi _t(\bar{S}_t,S^{t-1},U^{t-1}) \bigr) \bigr)} \Biggr] ,
	\end{align}
	with $w_t$ and $\xi_t$ being measurable functions mapping into $[0,1]$. In particular, $w_t$ is a binary classifier that estimates the probability of $M_t=1$ given $(\bar{S}_t,Y^{t-1},S^{t-1},U^{t-1})$, while $\xi_t$ plays the same role given $(\bar{S}_t,S^{t-1},U^{t-1})$.
\end{lemma}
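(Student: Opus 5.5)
The plan is the standard density-ratio-by-classification argument, applied separately to each time index and each classifier. First, the objectives \eqref{Eq.likeRatioOP1} and \eqref{Eq.likeRatioOP2} are sums over $t$ of terms, each involving only its own function $w_t$ (resp.\ $\xi_t$). The maximization therefore decouples: each $w_t$ maximizes its own cross-entropy term. Fix $t$ and write $A=(Y^{t-1},S^{t-1},U^{t-1})$, so the $t$-th term is $\mathsf{E}[M_t\log w_t(\bar S_t,A)+(1-M_t)\log(1-w_t(\bar S_t,A))]$. Conditioning on $(\bar S_t,A)$, this equals $\mathsf{E}[\eta\log w_t+(1-\eta)\log(1-w_t)]$ with $\eta(\bar s,a)=P(M_t=1\mid \bar S_t=\bar s,A=a)$. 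Since $p\log w+(1-p)\log(1-w)$ is strictly concave in $w\in[0,1]$ and maximized at $w=p$, the pointwise maximizer $w_t^\star=\eta$ also maximizes the expectation. This holds almost surely on the support of $(\bar S_t,A)$, which suffices because $c_i$ is only evaluated on realized trajectories.

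Next, identify $\eta$ via Bayes' rule. Conditional on $M_t=1$, the pair $(\bar S_t,A)$ has law $P(S_t,A)=P(S_t\mid A)P(A)$. Conditional on $M_t=0$, it has law $\frac{1}{M}P(A)$, because $\tilde S_t$ is uniform and independent of $(S^T,U^T,Y^T)$ and hence of $A$. Also $M_t$ is independent of $(S_t,\tilde S_t,A)$, with prior $1/2$. Therefore
\[
\frac{w_t^\star(s,a)}{1-w_t^\star(s,a)}=\frac{P(S_t=s\mid a)P(a)}{\frac{1}{M}P(a)}=M\,P(S_t=s\mid Y^{t-1},S^{t-1},U^{t-1}).
\]
The identical computation with $A'=(S^{t-1},U^{t-1})$ gives $\xi_t^\star/(1-\xi_t^\star)=M\,P(S_t=s\mid S^{t-1},U^{t-1})$. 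Subtracting the log-odds cancels $\log M$ and leaves $\log\frac{P(S_t\mid Y^{t-1},S^{t-1},U^{t-1})}{P(S_t\mid S^{t-1},U^{t-1})}$.

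Finally, check that this equals the expression in Lemma~\ref{Lm.DetCtr} by dividing numerator and denominator by $P(Y^{t-1}\mid S^{t-1},U^{t-1})$, which recovers \eqref{Eq.likeliRatio}.

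The main obstacle is measure-theoretic bookkeeping rather than anything deep. The optimizer is determined only where $P(A)>0$ and where $P(S_t=s\mid A)>0$; otherwise $w_t^\star=0$ and the log-odds diverge. I will handle this by restricting attention to realizations with positive probability, where $c_i$ is finite. I also need to be careful that the conditional law of $\bar S_t$ given $M_t=0$ factorizes as uniform times $P(A)$, which uses the independence assumption on $\tilde S_t$.
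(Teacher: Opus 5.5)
Your proposal is correct and follows essentially the paper's argument: Bayes' rule with equal priors turns the optimal classifier's log-odds into the log density ratio between the joint law of $(S_t,A)$ and the uniform-times-marginal law, and subtracting the two classifiers' log-odds removes the uniform factor. The paper does this by inserting $P(\tilde S_t)$ into both ratios up front, where you cancel $\log M$ at the end. You also go beyond the paper in three places: you prove the pointwise optimality of the cross-entropy maximizer, which the paper simply cites from \cite{molavipour2021neural}, and you make explicit both the decoupling over $t$ and the support and uniqueness issues.
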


\begin{proof}
	See Appendix~\ref{App:Th.InfCmp}.
\end{proof}
With Lemma~\ref{Lm.InfCmp}, the likelihood ratio in \eqref{Eq.likeRatio} can be numerically approximated by first parameterizing the classifiers $w$ and $\xi$, then collecting data tuples of $(\bar{S}^T,Y^T,S^T,U^T)$, and finally solving the optimization problems \eqref{Eq.likeRatioOP1} and \eqref{Eq.likeRatioOP2} approximately via stochastic gradient ascent. This procedure provides a tractable approximation of the privacy leakage associated with the index $S_t=s_t$ conditioned on the past decisions $(S^{t-1}=s^{t-1},U^{t-1}=u^{t-1})$.

Finally, based on Theorem \ref{Th.gradientObj} and Lemma \ref{Lm.InfCmp}, we can alternately update the joint policy $\pi$ and the information loss estimator with stochastic gradient descent methods.
\begin{figure}[b]
	\centering
	\includegraphics[width=0.28\textwidth]{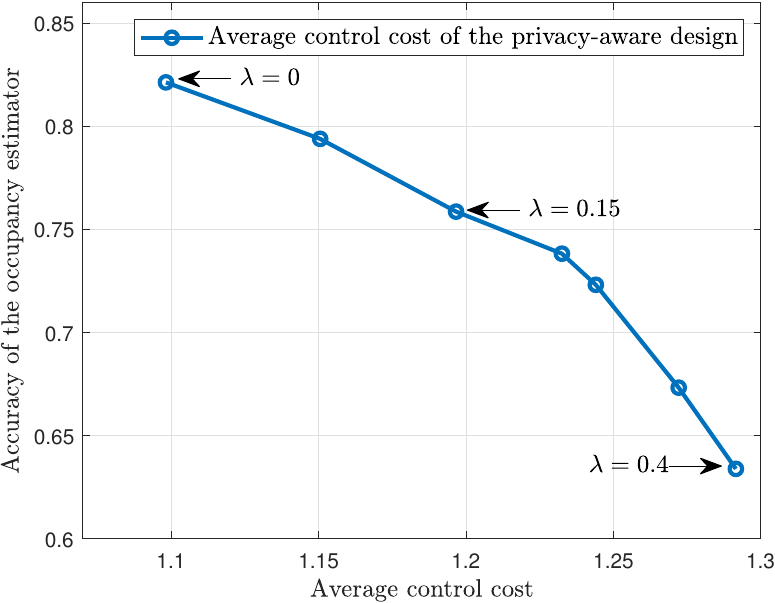}
	\caption{The average stage cost under the optimal privacy-aware quantizer-controller co-design.}
	\label{Fig.Tradeoff}
\end{figure}
\section{Simulation}\label{Sec:Sim}
In this section, we provide numerical validation of the proposed privacy-aware networked control framework using the indoor CO$_2$ concentration regulation model given in \eqref{Eq.Dynamics-CO2}. The model parameters are set to $a = 0.85$, $b_y = 0.6$, and $b_u = -0.1$. The process noise $W_t$ and the observation noise $V_t$ are assumed to be independent and identically distributed zero-mean Gaussian random variables with identical standard deviations of $0.1$.

The private occupancy process $Y_t$ is modeled as a Markov chain with state space $\{0,1,2\}$ and transition probability matrix
\begin{align}
	\begin{bmatrix}
		0.80 & 0.15 & 0.05 \\
		0.10 & 0.85 & 0.05 \\
		0.05 & 0.15 & 0.80
	\end{bmatrix}. \nonumber
\end{align}
The control performance is evaluated using the stage cost $c(x_t,u_t) = x_t^2 + 0.01\,u_t^2$, which penalizes deviations of the indoor CO$_2$ concentration while incorporating a quadratic penalty on the control effort to account for energy consumption. The quantization alphabet is restricted to $S_t \in \{0,1,2,3\}$, and the simulation horizon is set to $T = 50$. Finally, the quantizer is initially optimized via the discrete variational auto-quantizer method with $\lambda=0$ \cite{rolfe2017discrete}, and then jointly optimized with the controller using the proposed privacy-aware numerical optimization method described in Sec.~\ref{Sec:Comp}.

We first investigate the privacy--utility trade-off of the proposed privacy-aware quantizer-controller co-design. Figure~\ref{Fig.Tradeoff} shows the average stage cost $c_t$ versus the accuracy of the maximum-likelihood occupancy estimator. Lower estimation accuracy indicates a higher level of privacy, since the adversary is less able to correctly infer the sensitive occupancy state. As observed in Fig.~\ref{Fig.Tradeoff}, enhanced privacy is obtained at the expense of control performance. In particular, as the privacy penalty $\lambda$ increases from 0 to 0.4, the average control cost rises from 1.098 to 1.291, an increase of approximately 17.6\%, while the adversary’s estimation accuracy drops from 0.821 to 0.634, a decrease of approximately 22.8\%.

To further illustrate this trade-off, Figs.~\ref{Fig.Est1}--\ref{Fig.Est3} show the state trajectories, the corresponding state reconstructions obtained by a maximum likelihood state estimator, the quantization indices, and the misdetection results under different values of the privacy penalty $\lambda$. As shown in Fig.~\ref{Fig.Est2}, the quantization indices become increasingly flattened around $S_t = 2$ as $\lambda$ increases, thereby removing more informative features from the transmitted data. Consequently, the state reconstruction error shown in Fig.~\ref{Fig.Est1} increases, leading to poorer regulation performance. At the same time, Fig.~\ref{Fig.Est3} shows that the accuracy of the occupancy estimator decreases, since the flatter quantization indices also suppress sensitive information related to the indoor occupancy. 
\begin{figure}[h]
	\centering
	\subfigure{
		\centering
		\includegraphics[width=0.28\textwidth]{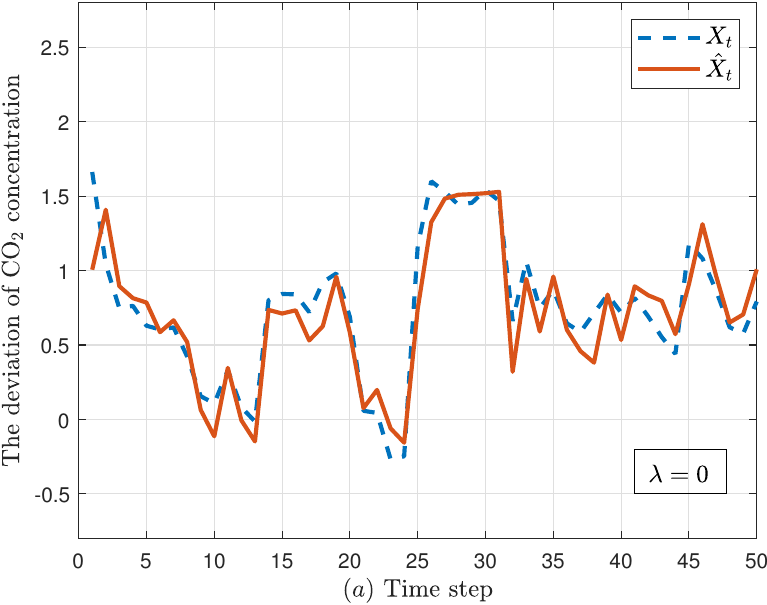}
	}
	\subfigure{
		\centering
		\includegraphics[width=0.28\textwidth]{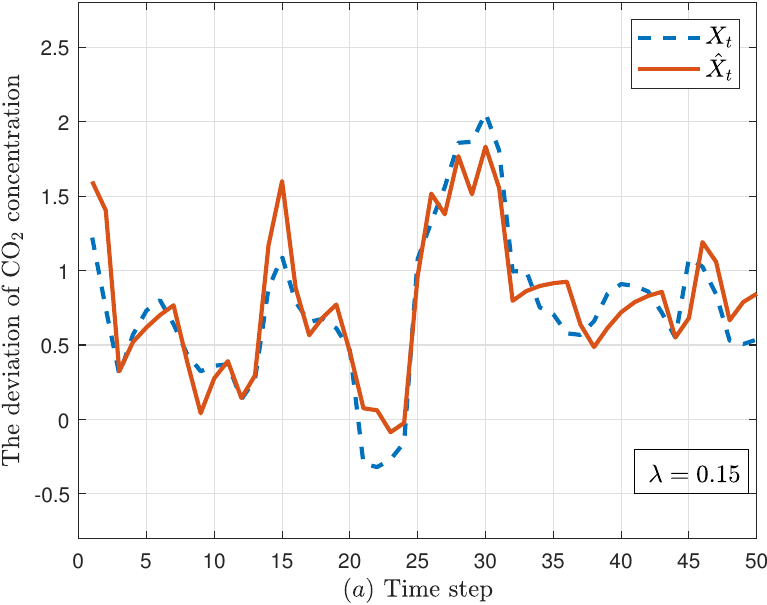}
	}
	\subfigure{
		\centering
		\includegraphics[width=0.28\textwidth]{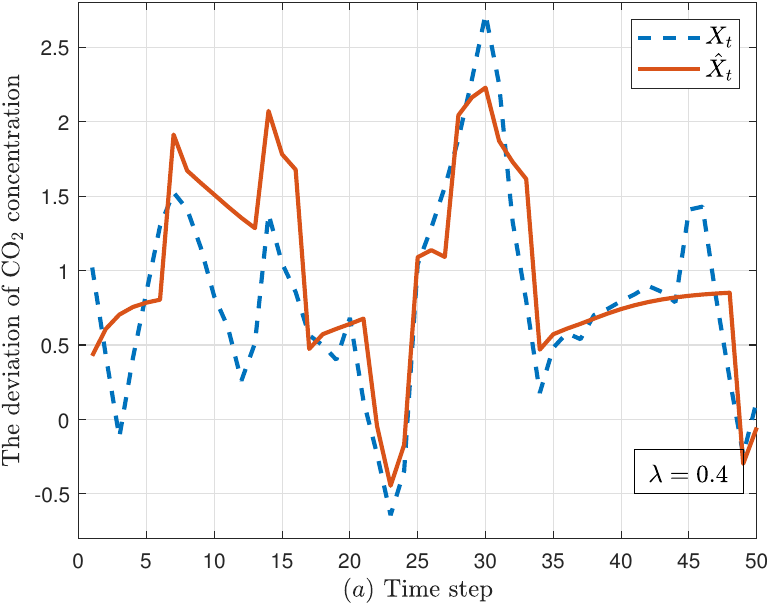}
	}
	\caption{Trajectories of indoor CO$_2$ concentration error and its reconstruction from quantization indices.}
	\label{Fig.Est1}
\end{figure}
\begin{figure}[h]
	\centering
	\subfigure{
		\centering
		\includegraphics[width=0.28\textwidth]{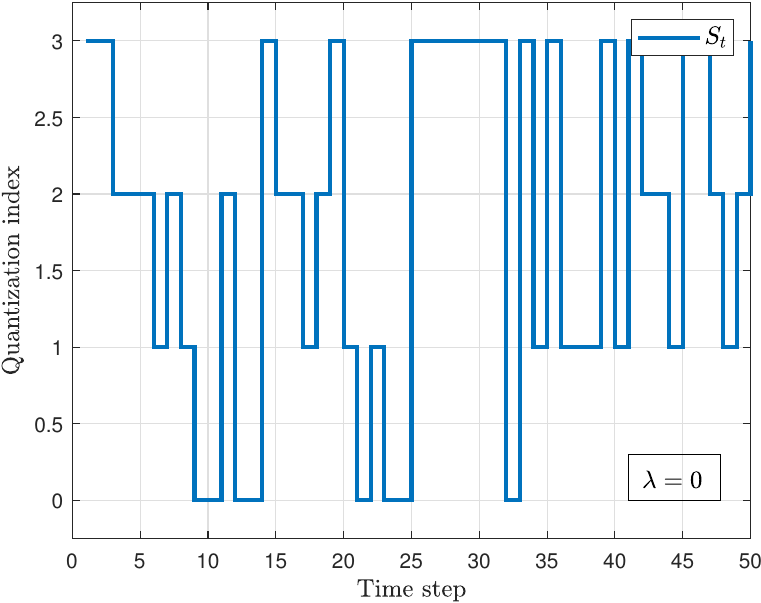}
	}
	\subfigure{
		\centering
		\includegraphics[width=0.28\textwidth]{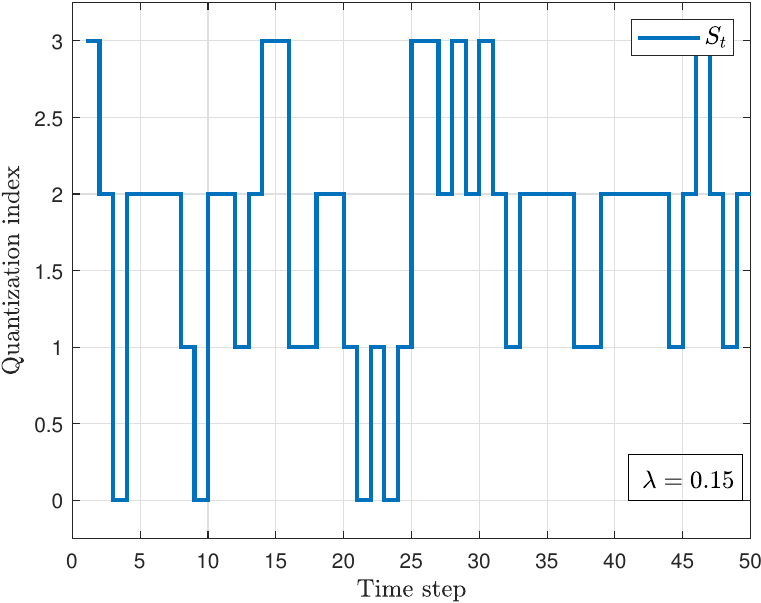}
	}
	\subfigure{
		\centering
		\includegraphics[width=0.28\textwidth]{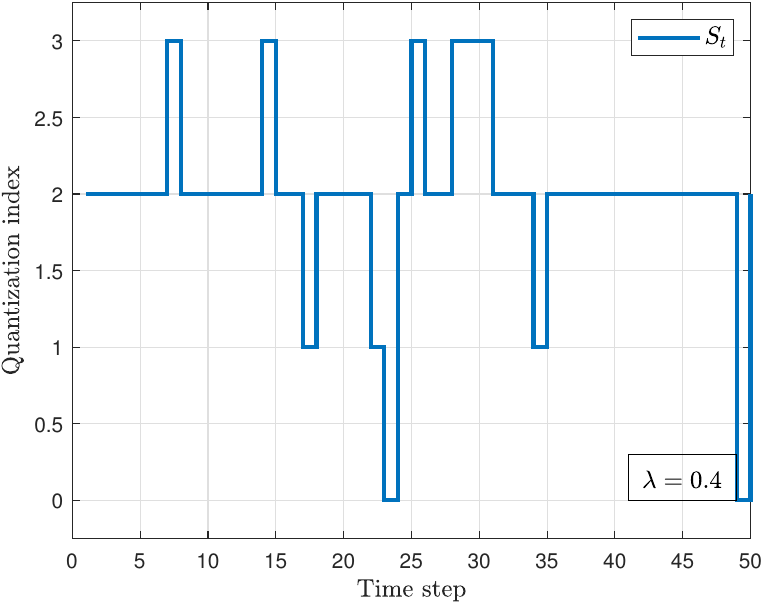}
	}
	\caption{Trajectories of quantization indices.}
	\label{Fig.Est2}
\end{figure}
\begin{figure}[h]
	\centering
	\includegraphics[width=0.28\textwidth]{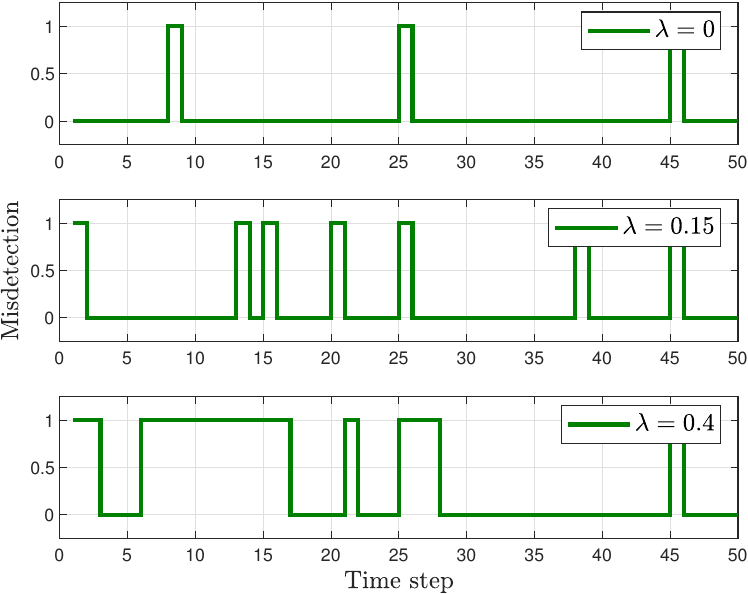}
	\caption{Misdetections of the occupancy estimator.}
	\label{Fig.Est3}
\end{figure}
\section{Conclusion}\label{Sec:Con}
In this paper, we developed an information-theoretic framework for privacy-aware networked control, in which the quantizer and controller are jointly optimized. We characterized the structural properties of the optimal privacy-aware quantizer and controller through the corresponding optimality equations. In addition, we proposed a numerical approach for their joint optimization and validated the proposed framework through an indoor CO$_2$ concentration control example.
%\clearpage

\bibliographystyle{ieeetr}
\bibliography{reference}

\appendices

\section{Proof of Lemma \ref{Lm.DetCtr}}\label{App:Lm.DetCtr}
To prove Lemma \ref{Lm.DetCtr}, we first show that the mutual information can be expanded into an additive form.
\begin{lemma} \label{Lm.MI}
	The mutual information term in \eqref{Eq.OP} can be expanded as 
	\begin{align} \label{Eq.ExpandObj}
		I\left(S^T, U^{T};Y^{T}\right) = \sum_{t=1}^{T} {I\left( \left. S_t;Y^{t-1} \right| S^{t-1}, U^{t-1}\right)}
		=		\sum_{t=1}^{T} {\mathsf{E}\left[c_{i}\left( S_t,Y^{t-1},S^{t-1},U^{t-1} \right)\right]},
	\end{align}
	where $c_{i}\left( S_t,Y^{t-1},S^{t-1},U^{t-1} \right) =\log \frac{P\left( S_t,Y^{t-1} \middle| S^{t-1},U^{t-1} \right)}{P\left( S_t \middle| S^{t-1},U^{t-1} \right) P\left( Y^{t-1} \middle| S^{t-1},U^{t-1} \right)}$ and $S_{0}=U_{0}=\emptyset$.
\end{lemma}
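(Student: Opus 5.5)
The plan is to apply the chain rule of mutual information twice: first along the time index for the pair $(S_t,U_t)$, and then along the private sequence $Y^T$. Two conditional-independence (Markov) properties implied by the system model and the policy structure should then make every extra term vanish. The second equality in \eqref{Eq.ExpandObj} is just the definition of conditional mutual information written as an expectation of the log-likelihood ratio $c_i$, so the real work is the first equality.

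\textbf{Step 1 (chain rule over time).} Order the disclosed variables as $S_1,U_1,S_2,U_2,\ldots,S_T,U_T$ and write
\begin{align*}
I(S^T,U^T;Y^T)=\sum_{t=1}^T\Big[ I(S_t;Y^T\mid S^{t-1},U^{t-1}) + I(U_t;Y^T\mid S^{t},U^{t-1})\Big].
\end{align*}
The controller terms vanish. Indeed, $U_t$ is drawn from $\pi_t^c(U_t\mid S^t,U^{t-1})$ using randomization that is independent of everything else, so $U_t$ is conditionally independent of $Y^T$ given $(S^t,U^{t-1})$. Hence $I(U_t;Y^T\mid S^t,U^{t-1})=0$.

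\textbf{Step 2 (chain rule over $Y$).} Split $Y^T=(Y^{t-1},Y_t^T)$ in each quantizer term:
\begin{align*}
I(S_t;Y^T\mid S^{t-1},U^{t-1})=I(S_t;Y^{t-1}\mid S^{t-1},U^{t-1})+I(S_t;Y_t^T\mid Y^{t-1},S^{t-1},U^{t-1}).
\end{align*}
It remains to show that the last term is zero. That is, we need the Markov chain
\[
Y_t^T \;-\; (Y^{t-1},S^{t-1},U^{t-1}) \;-\; S_t .
\]

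\textbf{Step 3 (the key conditional independence).} This is the main obstacle. I would write the joint law of $(Y^T,X^t,Z^t,S^t,U^{t-1})$ by the factorization implied by the model:
\begin{align*}
P(y_1)\prod_{k=2}^{T}P(y_k\mid y_{k-1})\; p(x_1)\prod_{k=1}^{t}P(z_k\mid x_k)\pi_k^e(s_k\mid z^k,s^{k-1},u^{k-1})\prod_{k=1}^{t-1}\pi_k^c(u_k\mid s^k,u^{k-1})\,p(x_{k+1}\mid x_k,y_k,u_k).
\end{align*}
In this factorization, $Y_t,\ldots,Y_T$ enter only through the factor $\prod_{k\ge t}P(y_k\mid y_{k-1})$, which depends on the remaining variables only through $y_{t-1}$. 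The state transitions up to $x_t$ use only $y^{t-1}$.

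Next, integrate out $x^t$ and $z^t$. The joint law then factors as
\[
g(y_t^T,y_{t-1})\, h(y^{t-1},s^t,u^{t-1}),
\]
which gives $P(y_t^T\mid y^{t-1},s^t,u^{t-1})=P(y_t^T\mid y_{t-1})$. This conditional does not depend on $s_t$, so the conditional mutual information is zero.

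The point that needs care is that $X_t$, and hence $Z_t$ and $S_t$, depends on $Y_{t-1}$ but not on $Y_t$. This is exactly why the expansion involves $Y^{t-1}$ rather than $Y^t$, and the timing convention in \eqref{Eq.Dynamics-state} must be used precisely here.

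\textbf{Step 4 (assembly).} Summing the surviving terms gives $\sum_t I(S_t;Y^{t-1}\mid S^{t-1},U^{t-1})$. Expressing each term as $\mathsf{E}[\log P(S_t,Y^{t-1}\mid S^{t-1},U^{t-1})/(P(S_t\mid S^{t-1},U^{t-1})P(Y^{t-1}\mid S^{t-1},U^{t-1}))]$ yields $\mathsf{E}[c_i]$. For $t=1$ we use the conventions $S_0=U_0=\emptyset$ and $Y^0=\emptyset$, under which the first term is $0$; this is consistent with $S_1$ depending only on $Z_1$, which does not depend on any private input.
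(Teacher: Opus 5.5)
Your proof is correct and follows essentially the same route as the paper: a chain rule over time, a split of $Y^T$ into $Y^{t-1}$ and $Y_t^T$, and two conditional independences (exogeneity of the future private inputs, and $U_t$ depending on $Y$ only through $(S^t,U^{t-1})$) to eliminate the extra terms. The only difference is the order of steps. The paper keeps $(S_t,U_t)$ together when splitting $Y^T$ and drops $U_t$ at the end against $Y^{t-1}$, whereas you drop $U_t$ first against all of $Y^T$; your explicit factorization in Step 3 also justifies the key conditional independence more carefully than the paper, which only asserts it.
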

\begin{proof}
	See Appendix \ref{App:Lm.MI}.
\end{proof}
As a result, we have \eqref{Eq.simpObj}. Besides, given $\left(S^t, U^{t-1}\right)$, the stage cost $\mathsf{E}\left[ c\left( X_t,U_t \right) +\lambda c_{i}\left( S_t,Y^{t-1},S^{t-1},U^{t-1} \right)\middle| S^t, U^{t-1} \right]$ is a linear combination of the stochastic control policy. According to \cite{krishnamurthy2016partially}, it is sufficient to consider only the deterministic controller during optimization.
\section{Proof of Lemma \ref{Lm.Equivalent}}\label{App:Lm.Equivalent}
According to Lemma \ref{Lm.MI}, the objective function \eqref{Eq.OP} can be simplified as \eqref{Eq.simpObj}. 

To prove the equivalence between \eqref{Eq.simpObj} and \eqref{Eq.OP2}, we first show that given a quantization policy $\pi^e=\left\{\pi_{t}^e\right\}^T_{t=1}$ and a control policy $\pi_c=\left\{\pi_{t}^c\right\}^T_{t=1}$, we can construct the policy collections $\gamma^e=\left\{\gamma^e_t\right\}^T_{t=1}$ and $\gamma^c=\left\{\gamma^c_t\right\}^T_{t=1}$ for the auxiliary decision-making problem \eqref{Eq.OP2} that achieves the same value of objective function \eqref{Eq.simpObj} as $(\pi^e,\pi^c)$. To this end, given $\pi^e$, we choose the quantization policy and control policy collections as
$$
\gamma^e_t=\left\{ q_t\left( \left. s_t\right|z^t \right)=\pi_{t}^e\left({s_t}\left|{z^t,S^{t-1}}, U^{t-1}\right. \right), \forall s_t, z^t \right\}, \nonumber
$$
and
$$
\gamma^c_t=\left\{ \mu_t\left( \gamma^e_t, s_t \right)=\pi_{t}^c\left({s_t}, {S^{t-1}}, U^{t-1} \right), \forall s_t \right\}, \nonumber
$$
where we collect the conditional distributions associated with same $(S^{t-1}, U^{t-1})$ into the collection $\gamma^e_t$, and then select the control policies with same $(S^{t-1}, U^{t-1})$ into $\gamma^c_t$. Clearly, we have $\mathcal{{L}}\left(\gamma^e, \gamma^c\right)=\mathcal{L}\left(\pi^e, \pi^c\right)$.% where  $\mathcal{{L}}\left(\gamma^e, \pi_u\right)$ and $\mathcal{L}\left(\pi_e, \pi_u\right)$ are the values of objective function under $\paren{\gamma^e, \pi_u}$ and $\paren{\pi_e, \pi_u}$, respectively. 

We next show that, given $\gamma^e$ and $\gamma^c$, we can construct the quantization policy $\pi^e$ and the control policy $\pi^c$ that achieve the same value of objective function as $\gamma^e$ and $\gamma^c$. Given $\gamma^e$ and $\gamma^c$, let $	\pi_{1}^e\left(\left. {S_1}\right|{Z_1}\right)=q_1\left(\left. S_1\right|Z_1\right)$. For $t>1$, given the policy collections $\gamma^e_t\paren{S^{t-1}, U^{t-1}}=\mathcal{P}_t\paren{S^{t-1}, U^{t-1}}$ and $\gamma^c_t\paren{S^{t-1}, U^{t-1}}=\mathcal{Q}_t\paren{S^{t-1}, U^{t-1}}$ via \eqref{Eq.OP2}, we select the quantization policy by,
$$
\pi_{t}^e\left(\left. {S_t}\right|{Z^t,S^{t-1}, U^{t-1}}\right)=q_t\left(\left. S_t\right|Z^t\right)\in \mathcal{P}_t\paren{S^{t-1}, U^{t-1}}.
$$
Then, we randomly sample $S_t$ from $q_t(\cdot | Z^t)$, and choose the control policy as follows,
$$
\pi_{t}^c\left(S^{t}, U^{t-1}\right)=\mu_t\left(\gamma_t^e,  S_t\right)\in \mathcal{Q}_t\paren{S^{t-1}, U^{t-1}}.
$$
Clearly, we have $\mathcal{L}\left(\pi^e, \pi^c\right)=\mathcal{{L}}\left(\gamma^e, \gamma^c\right)$. Thus, the auxiliary decision-making problem \eqref{Eq.OP2} and the simplified optimization problem \eqref{Eq.simpObj} are equivalent.

The equivalence between \eqref{Eq.simpObj} and \eqref{Eq.OP3} can be proved similarly.
\section{Proof of Theorem \ref{Th.OPTEQU}}\label{App:Th.OPTEQU}
The next lemma expresses the stage cost in \eqref{Eq.OP2} in terms of the quantizer's belief state $b_{t}^e$ and presents the recursive update rule of $b_{t}^e$.
\begin{lemma} \label{Lm.BeliefStateLoss}
	Let $b_{t}^e$ denote the quantizer's belief state, \emph{i.e.,} $b_{t}^e\left( x_t,y^{t-1},z^t \right) =p\left( \left. x_t\right|y^{t-1},z^t,U^{t-1} \right) P\left(\left. y^{t-1},z^t\right|S^{t-1},U^{t-1} \right)$.
	Given the policy collection $\gamma^e_t$ and information $\paren{S^{t-1}, U^{t-1}}$, we have 
	\begin{align}\label{Eq.condMIProof}
		\mathsf{E}\left[ \left. c_i\left( S_t,Y^{t-1},S^{t-1},U^{t-1} \right) \right|S^{t-1},U^{t-1} \right] =\sum_{s_t,y^{t-1},z^t}{q_t\left( \left. s_t \right|z^t \right) \int{b_{t}^{e}\left( x_t,y^{t-1},z^t \right) dx_tc_i\left( s_t,y^{t-1},\gamma _{t}^{e},b_{t}^{e} \right)}},
	\end{align}
	with
	\begin{align}\label{Eq.infLossProof}
		c_i\left( s_t,y^{t-1},\gamma _{t}^{e},b_{t}^{e} \right) =\log \frac{\sum_{z^t}{q_t\left( \left. s_t \right|z^t \right) \int{b_{t}^{e}\left( x_t,y^{t-1},z^t \right) dx_t}}}{\left( \sum_{y^{t-1},z^t}{q_t\left( \left. s_t \right|z^t \right) \int{b_{t}^{e}\left( x_t,y^{t-1},z^t \right) dx_t}} \right) \left( \sum_{z^t}{\int{b_{t}^{e}\left( x_t,y^{t-1},z^t \right) dx_t}} \right)},
	\end{align}
	and
	\begin{align}\label{Eq.condCtrCostProof}
		\mathsf{E}\left[ \left. c\left( X_t,U_t \right) \right|S^{t-1},U^{t-1} \right] =\sum_{s_t,y^{t-1},z^t}{\int{q_t\left( s_t|z^t \right) b_{t}^{e}\left( x_t,y^{t-1},z^t \right) c\left( x_t,\mu _t\left( \gamma _{t}^{e},s_t \right) \right) dx_t}}.
	\end{align}
	with $\mathsf{E}\left[ \left. c\left( X_T \right) \right|S^{T-1},U^{T-1} \right] =\sum_{s_T,y^{T-1},z^T}{\int{q_T\left( s_T|z^T \right) b_{T}^{e}\left( x_T,y^{T-1},z^T \right) c\left( x_T \right) dx_T}}$.
	Besides, $b_{t}^e$ can be updated via
	\begin{align}
		b_{t+1}^{e}\left( x_{t+1},y^t,z^{t+1} \right) =\frac{P\left( z_{t+1}|x_{t+1} \right) P\left( y_t|y_{t-1} \right) q _{t}\left( \left. S_t \right|z^t \right) \int{p\left( x_{t+1}|x_t,y_t,\gamma _{t}^{c}\left( \gamma _{t}^{e},S_t \right) \right) b_{t}^{e}\left( x_t,y^{t-1},z^t \right) dx_t}}{\int{\sum_{y^{t-1},z^t}{q _{t}\left( \left. S_t \right|z^t \right) b_{t}^{e}\left( x_t,y^{t-1},z^t \right) dx_t}}}.
	\end{align}
	which is denoted as $b_{t+1}^e=\Phi\left( b_{t}^e,\gamma_t,S_t \right)$.
\end{lemma}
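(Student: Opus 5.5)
The plan is to treat the three claims of Lemma~\ref{Lm.BeliefStateLoss} as direct consequences of one joint-distribution factorization. For that factorization I fix $(S^{t-1},U^{t-1})=(s^{t-1},u^{t-1})$ and the collections $\gamma_t^e=\{q_t(\cdot\mid z^t)\}$ and $\gamma_t^c=\{\mu_t\}$ selected by this common information.

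I first write the joint law of $(X_t,Y^{t-1},Z^t,S_t)$ conditioned on $(s^{t-1},u^{t-1})$. Since $S_t$ is generated by $\pi_t^e(\cdot\mid Z^t,s^{t-1},u^{t-1})=q_t(\cdot\mid Z^t)$, and this quantizer only sees $(Z^t,S^{t-1},U^{t-1})$, we have the Markov chain $S_t - (Z^t,S^{t-1},U^{t-1}) - (X_t,Y^{t-1})$. This gives
\begin{align*}
p(x_t,y^{t-1},z^t,s_t\mid s^{t-1},u^{t-1}) = q_t(s_t\mid z^t)\, b_t^e(x_t,y^{t-1},z^t),
\end{align*}
where I use the definition of $b_t^e$ as the product $p(x_t\mid y^{t-1},z^t,\cdot)\,P(y^{t-1},z^t\mid s^{t-1},u^{t-1})$.

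From this factorization I obtain the three marginals appearing in $c_i$ by integrating out $x_t$ and summing out $z^t$, and possibly $y^{t-1}$ or $s_t$:
\begin{itemize}
\item $P(s_t,y^{t-1}\mid\cdot)=\sum_{z^t}q_t\int b_t^e\,dx_t$,
\item $P(s_t\mid\cdot)=\sum_{y^{t-1},z^t}q_t\int b_t^e\,dx_t$,
\item $P(y^{t-1}\mid\cdot)=\sum_{z^t}\int b_t^e\,dx_t$, using $\sum_{s_t}q_t=1$.
\end{itemize}
Substituting these into the definition of $c_i$ from Lemma~\ref{Lm.MI} gives \eqref{Eq.infLossProof}. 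Taking the conditional expectation against the joint law above gives \eqref{Eq.condMIProof}, because $c_i$ does not depend on $x_t$ or $z^t$.

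For \eqref{Eq.condCtrCostProof}, I use Lemma~\ref{Lm.DetCtr}, under which $U_t=\mu_t(\gamma_t^e,S_t)$ is deterministic given the common information and $S_t$. The expectation of $c(X_t,U_t)$ is therefore the integral of $c(x_t,\mu_t(\gamma_t^e,s_t))$ against the same joint law. The terminal case is identical with $c(x_T)$.

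For the belief update, I apply Bayes' rule to $(x_{t+1},y^t,z^{t+1})$ given $(s^t,u^t)$. The numerator is the joint of all variables with $(s_t,u_t)$, which I expand as
\begin{align*}
P(z_{t+1}\mid x_{t+1})\,P(y_t\mid y_{t-1})\int p(x_{t+1}\mid x_t,y_t,u_t)\,q_t(s_t\mid z^t)\,b_t^e\,dx_t .
\end{align*}
This step uses the model's conditional independences: $Z_{t+1}$ depends only on $X_{t+1}$, $Y_t$ depends only on $Y_{t-1}$, and $X_{t+1}$ depends only on $(X_t,Y_t,U_t)$. Conditioning on $U_t=\mu_t(\gamma_t^e,s_t)$ adds no further normalization, since $U_t$ is a deterministic function of $(s_t,s^{t-1},u^{t-1})$. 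The denominator is therefore just $P(s_t\mid s^{t-1},u^{t-1})$, computed above.

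The main obstacle is bookkeeping the mixed definition of $b_t^e$. The factor $p(x_t\mid y^{t-1},z^t,U^{t-1})$ must genuinely not depend on $S^{t-1}$ once $(y^{t-1},z^t,u^{t-1})$ is fixed. I would verify this inductively: the $S$'s influence $X$ only through $U$, and $S$ is conditionally independent of $X$ given $Z$. Carrying this invariance through the recursion is what guarantees that the updated $b_{t+1}^e$ has the stated product form.
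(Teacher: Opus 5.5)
Your proposal is correct and follows the paper's proof essentially step for step. You factor the conditional joint as $q_t(s_t\mid z^t)\,b_t^e$, read off the three marginals to obtain $c_i$ and both conditional expectations, and get $\Phi$ by Bayes' rule with denominator $P(S_t\mid S^{t-1},U^{t-1})$, the deterministic $U_t$ adding no normalization. Your explicit check that $p(x_t\mid y^{t-1},z^t,s^{t-1},u^{t-1})$ does not depend on $s^{t-1}$ is used only implicitly in the paper, and it holds because the quantizer and controller factors of the joint density involve only $(z,s,u)$ and not $(x,y)$.
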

\begin{proof}	 
	See Appendix \ref{App:Lm.BeliefStateLoss}.
\end{proof}
At time $T$, given $(S^{T-1},U^{T-1})$, we define the following optimal cost-to-go function,
\begin{align}
	J_{T}^{\star}\left( S^{T-1},U^{T-1} \right) =\min_{\gamma^e _T} \mathsf{E}\left[ \left. c\left( X_T \right) +\lambda c_i\left( S_T,Y^{T-1},S^{T-1},U^{T-1} \right) \right|S^{T-1},U^{T-1} \right] ,
\end{align}
where the conditional stage costs are functions of $b^e_T$ according to Lemma \ref{Lm.BeliefStateLoss}. Therefore, we can denote $J_{T}^{\star}\left( S^{T-1},U^{T-1} \right)$ with $J_{T}^{\star}\left( b^e_T \right)$. 

We assume that the optimal cost-to-go function at $t+1$ is a function of $b_{t+1}^e$, i.e., $J_{t+1}^{\star}\left( S^{t},U^{t} \right)=J_{t+1}^{\star}\left( b^e_{t+1} \right)$. Then, at time step $t$, the optimal cost-to-go function is given as
\begin{align}
	J_{t}^{\star}\left( S^{t-1},U^{t-1} \right) =\min_{\gamma _t} \mathsf{E}\left[ \left. c\left( X_t,U_t \right) +\lambda c_i\left( S_t,Y^{t-1},S^{t-1},U^{t-1} \right) \right|S^{t-1},U^{t-1} \right] +\mathsf{E}\left[ \left. J_{t+1}^{\star}\left( S^t,U^t \right) \right|S^{t-1},U^{t-1} \right] , \nonumber
\end{align}
where the first term is a function of $b_t^e$ and $\gamma_t$, and the second term can be written as $J_{t+1}^{\star}\left( S^t,U^t \right) =J_{t+1}^{\star}\left( b_{t+1}^{e} \right) =J_{t+1}^{\star}\left( \Phi \left( b_{t}^{e},\gamma _t,S_t \right) \right)$, according to Lemma \ref{Lm.BeliefStateLoss}. The optimality equation \eqref{Eq.VFEst2} then follows by induction.
\section{Proof of Theorem \ref{Th.OPTEQU2}}\label{App:Th.OPTEQU2}
To prove Theorem \ref{Th.OPTEQU2}, we show that the stage cost admits a representation in terms of belief states and derive the corresponding update equations of beliefs.
\begin{lemma}\label{Lm.BeliefStateLoss2}
	Let $b_{t}^c$ denotes the controller's belief state, \emph{i.e.,} $b_{t}^c\left( x_t,y^{t-1},z^t \right) =p\left(\left. x_t\right|y^{t-1},z^t,U^{t-1} \right) P\left(\left. y^{t-1},z^t\right|S^t,U^{t-1} \right)$.
	Given the controller policy $\pi_{t}^c\left( S^{t}, U^{t-1}\right)$ and information $\paren{S^{t}, U^{t-1}}$, we have 
	\begin{align}
		\mathsf{E}\left[\left. c\paren{X_t,U_t}\right|S^{t}, U^{t-1} \right] = \sum_{y^{t-1},z^t}{\int{b_{t}^c\left( x_t,y^{t-1},z^t \right)  c\left( x_t,\pi_{t}^c\left( S^{t}, U^{t-1}\right) \right)dx_t}}, \nonumber
	\end{align}
	Moreover, $b_{t}^c$ can be computed via 
	\begin{equation}
		b_{t}^c\left( x_t,y^{t-1},z^t \right) =\frac{q_t\left(\left. S_t\right|z^t\right) b_{t}^e\left( x_t,y^{t-1},z^t \right)}{\sum_{y^{t-1},z^t}{q_t\left(\left. S_t\right|z^t\right) \int{b_{t}^e\left( x_t,y^{t-1},z^t \right) dx_t}}}, \nonumber
	\end{equation}
	which is denoted by $b_{t}^c=\Phi ^c\left( b_{t}^e,\gamma^e _t,S_t \right)$. $b_{t+1}^e$ can be updated via
	\begin{equation}
		b_{t+1}^e\left( x_{t+1},y^t,z^{t+1} \right)
		=\int{P\left(\left. z_{t+1}\right|x_{t+1}\right) p\left(\left. x_{t+1}\right|x_t,y_t,U_t \right) P\left(\left. y_t\right|y_{t-1} \right) b_{t}^c\left( x_t,y^{t-1},z^t \right) dx_t}, \nonumber
	\end{equation}  
	which is denoted as $b_{t+1}^e=\Phi ^e\left( b_{t}^c,U_t \right)$.
\end{lemma}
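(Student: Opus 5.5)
We prove Lemma~\ref{Lm.BeliefStateLoss2} by direct Bayesian manipulation of the joint law of $(X_t,Y^{t-1},Z^t,S^t,U^{t-1})$. The computation uses the factorization induced by the model \eqref{Eq.Dynamics-state}--\eqref{Eq.Dynamics-obs} and by the policy collections: $S_t$ is drawn from $q_t(\cdot\mid z^t)\in\gamma_t^e(S^{t-1},U^{t-1})$, and $U_t=\pi_t^c(S^t,U^{t-1})$ is deterministic by Lemma~\ref{Lm.DetCtr}. The argument mirrors that of Lemma~\ref{Lm.BeliefStateLoss}, with the one-step update $\Phi$ split into a measurement step $\Phi^c$ (conditioning on $S_t$) and a prediction step $\Phi^e$ (applying $U_t$).

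\textbf{Control-cost representation.} Conditioned on $(S^t,U^{t-1})$, the action $U_t$ is fixed, equal to $\pi_t^c(S^t,U^{t-1})$. Hence
\[
\mathsf{E}[c(X_t,U_t)\mid S^t,U^{t-1}]=\int c\bigl(x_t,\pi_t^c(S^t,U^{t-1})\bigr)\,p(x_t\mid S^t,U^{t-1})\,dx_t .
\]
We expand $p(x_t\mid S^t,U^{t-1})=\sum_{y^{t-1},z^t}p(x_t\mid y^{t-1},z^t,S^t,U^{t-1})\,P(y^{t-1},z^t\mid S^t,U^{t-1})$. The key conditional independence is that $X_t$ is independent of $S^t$ given $(Y^{t-1},Z^t,U^{t-1})$. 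This holds because each $S_k$ is generated from $Z^k$ and past indices and actions only, so conditioning on the indices adds no information about $X_t$ beyond $(Y^{t-1},Z^t,U^{t-1})$. Once this is established, the integrand is exactly $b_t^c$ and the first claim follows.

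\textbf{Update $\Phi^c$.} By Bayes' rule,
\[
P(y^{t-1},z^t\mid S^t,U^{t-1})\propto P(S_t\mid y^{t-1},z^t,S^{t-1},U^{t-1})\,P(y^{t-1},z^t\mid S^{t-1},U^{t-1}),
\]
and the first factor equals $q_t(S_t\mid z^t)$ by the quantizer's information structure. Multiplying both sides by $p(x_t\mid y^{t-1},z^t,U^{t-1})$, which is unchanged because of the same conditional independence, gives the numerator $q_t(S_t\mid z^t)\,b_t^e$. The normalizing constant is obtained by summing over $(y^{t-1},z^t)$ and integrating over $x_t$.

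\textbf{Update $\Phi^e$.} We write
\[
p(x_{t+1},y^t,z^{t+1}\mid S^t,U^t)=\int P(z_{t+1}\mid x_{t+1})\,p(x_{t+1}\mid x_t,y_t,U_t)\,P(y_t\mid y_{t-1})\,p(x_t,y^{t-1},z^t\mid S^t,U^t)\,dx_t .
\]
Each factor uses a Markov property of the model: the observation depends only on the current state, the state transition only on $(x_t,y_t,U_t)$, and the private chain is first-order Markov and independent of the other variables given $y_{t-1}$. We then replace $p(x_t,y^{t-1},z^t\mid S^t,U^t)$ by $b_t^c$, which is valid because $U_t$ is a deterministic function of $(S^t,U^{t-1})$ and therefore adds no information.

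It remains to check that the left-hand side equals $b_{t+1}^e$. By definition, $b_{t+1}^e=p(x_{t+1}\mid y^t,z^{t+1},U^t)\,P(y^t,z^{t+1}\mid S^t,U^t)$. This equals $p(x_{t+1},y^t,z^{t+1}\mid S^t,U^t)$ once we again invoke the conditional independence of $X_{t+1}$ and $S^t$ given $(Y^t,Z^{t+1},U^t)$.

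\textbf{Main obstacle.} The main difficulty is rigorously justifying this conditional independence, namely that $X_t$ is independent of $S^t$ given $(Y^{t-1},Z^t,U^{t-1})$, because the beliefs are defined in this mixed factorized form. We would prove it by induction on $t$, writing the full joint density as a product of model kernels and policy terms. The policy terms $q_k(s_k\mid z^k)$ and $\mathbf{1}\{u_k=\pi_k^c(s^k,u^{k-1})\}$ do not involve $x$ or $y$, so they factor out of every conditional density of $x_t$ given $(y^{t-1},z^t,u^{t-1})$. All remaining steps are routine marginalization.
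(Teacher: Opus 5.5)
Your proposal is correct and follows essentially the same route as the paper: Bayes' rule extracting $q_t(S_t\mid z^t)$ for $\Phi^c$, the chain rule with the model kernels for $\Phi^e$, and the conditional independence of $X_t$ and $S^t$ given $(Y^{t-1},Z^t,U^{t-1})$ for the cost representation. The paper invokes that independence only informally (``removing irrelevant conditional information'' and the Markov chain $X_t \rightarrow (Z^t,U^{t-1}) \rightarrow S^t$), so your factorization argument is a useful tightening. Its key point is that the policy factors contain no $x$ or $y$. You do not actually need the independence in the $\Phi^c$ step, since $b_t^e$ and $b_t^c$ share the factor $p(x_t\mid y^{t-1},z^t,U^{t-1})$ by definition.
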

\begin{proof}	 
	See Appendix \ref{App:Lm.BeliefStateLoss2}.
\end{proof}
At time $T$, given $\left(S^{T},U^{T-1}\right)$, we define the controller’s optimal cost-to-go function as
\begin{align}
	\tilde{J}_{T}^{c,\star}\left(S^{T},U^{T-1}\right)
	=
	\mathsf{E}\left[ c\left(X_T\right)\middle|S^{T},U^{T-1} \right]. \nonumber
\end{align}
By Lemma \ref{Lm.BeliefStateLoss2}, $\tilde{J}_{T}^{c,\star}\left(S^{T},U^{T-1}\right)$ is completely determined by the belief state $b_{T}^c$, and can therefore be written as $\tilde{J}_{T}^{c,\star}\left(b_{T}^c\right)$. Moreover, using the belief update rule $
b_{T}^{c}=\Phi ^c\left( b_{T}^{e},\gamma _{T}^{e},S_T \right)$, we obtain $
\tilde{J}_{T}^{c,\star}\left(b_{T}^c\right)
=
\tilde{J}_{T}^{c,\star}\left( \Phi ^c\left( b_{T}^{e},\gamma _{T}^{e},S_T \right) \right)$.

Given $\left(S^{T-1},U^{T-1}\right)$, we define the quantizer’s optimal cost-to-go function based on $\tilde{J}_{T}^{c,\star}\left(b_{T}^c\right)$ as
\begin{align}
	\tilde{J}_{T}^{e,\star}\left( S^{T-1},U^{T-1} \right) 
	=
	\min_{\gamma _{T}^{e}}
	\mathsf{E}\left[ \lambda c_i\left( S_T,Y^{T-1},S^{T-1},U^{T-1} \right) \middle| S^{T-1},U^{T-1} \right]
	+\mathsf{E}\left[ c\left(X_T\right) \middle| S^{T-1},U^{T-1} \right]. \nonumber
\end{align}
According to \eqref{Eq.infLossProof}, the first term is determined by $b_{T}^e$. For the second term, we have
\begin{align}
	\mathsf{E}\left[ c\left( X_T \right) \middle| S^{T-1},U^{T-1} \right]
	&=
	\sum_{s_T}
	P\left( s_T\middle| S^{T-1},U^{T-1} \right)
	\mathsf{E}\left[ c\left( X_T \right) \middle| s_T,S^{T-1},U^{T-1} \right] \nonumber
	\\
	&=
	\sum_{y^{T-1},z^T,s_T}
	q_T\left( \left. s_T \right| z^T \right)
	\int b_{T}^{e}\left( x_T,y^{T-1},z^T \right) dx_T
	\,\tilde{J}_{T}^{c,\star}\left( \Phi ^c\left( b_{T}^{e},\gamma _{T}^{e},s_T \right) \right), \nonumber
\end{align}
which is also determined by $b_{T}^e$. Hence, $\tilde{J}_{T}^{e,\star}\left( S^{T-1},U^{T-1} \right)$ can be equivalently written as $\tilde{J}_{T}^{e,\star}\left( b_{T}^e \right)$.

At time $T-1$, given $\left(S^{T-1},U^{T-2}\right)$, we similarly obtain
\begin{align}
	\tilde{J}_{T-1}^{c,\star}\left( S^{T-1},U^{T-2} \right)
	&=
	\min_{U_{T-1}}
	\mathsf{E}\left[ c\left( X_{T-1},U_{T-1} \right) \middle| S^{T-1},U^{T-2} \right]
	+\tilde{J}_{T}^{e,\star}\left( S^{T-1},U^{T-1} \right) \nonumber
	\\
	&=
	\min_{U_{T-1}}
	\mathsf{E}\left[ c\left( X_{T-1},U_{T-1} \right) \middle| S^{T-1},U^{T-2} \right]
	+\tilde{J}_{T}^{e,\star}\left( b_{T}^{e} \right). \nonumber
\end{align}
By Lemma \ref{Lm.BeliefStateLoss2}, the first term is determined by $b_{T-1}^c$, while the second term satisfies $
\tilde{J}_{T}^{e,\star}\left( b_{T}^{e} \right)
=
\tilde{J}_{T}^{e,\star}\left( \Phi ^e\left( b_{T-1}^{c},U_{T-1} \right) \right)$. Therefore, $\tilde{J}_{T-1}^{c,\star}\left( S^{T-1},U^{T-2} \right)$ can be equivalently represented as $\tilde{J}_{T-1}^{c,\star}\left( b_{T-1}^c \right)$.

Proceeding in the same manner, we recursively obtain the coupled Bellman equations \eqref{Eq.VFEst} and \eqref{Eq.VFCtr} for all $t=T-1,T-2,\ldots,1$.
%After substituting \eqref{Eq.BSUPEst} into \eqref{Eq.VFEst}, the Bellman optimality equation \eqref{Eq.VFEst} depends on the belief state $b_{t}^e$. Thus, the optimal solution of \eqref{Eq.VFEst} is a function of $b_{t}^e$, \emph{i.e.,} $\gamma^{e,\star}_t\left(b_{t}^e\right)$. Similar, the optimal solution of \eqref{Eq.VFCtr} is $\pi_{t}^c\left(b_{t}^c\right)$. 
\section{Proof of Lemma \ref{Lm.InfCmp}}\label{App:Th.InfCmp}
We separate the information loss into two terms by adding a uniform random variable $\tilde{S}_t$,
\begin{equation}
	\log \frac{P\left( S_t,Y^{t-1},S^{t-1},U^{t-1} \right)}{P\left( S_t|S^{t-1},U^{t-1} \right) P\left( Y^{t-1},S^{t-1},U^{t-1} \right)}=\log \frac{P\left( S_t,Y^{t-1},S^{t-1},U^{t-1} \right)}{P\left( \tilde{S}_t \right) P\left( Y^{t-1},S^{t-1},U^{t-1} \right)}-\log \frac{P\left( S_t,S^{t-1},U^{t-1} \right)}{P\left( \tilde{S}_t \right) P\left( S^{t-1},U^{t-1} \right)}.
\end{equation}
Next, we show each logarithmic term could be computed using a binary classifier.

Let $M_t=1$ denote positive samples from the space $\left\{S_t,Y^{t-1},S^{t-1},U^{t-1}\right\}$, and $M_t=0$ denote negative samples from $\left\{\tilde{S}_t,Y^{t-1},S^{t-1},U^{t-1}\right\}$. We use $\left\{\bar{S}_t,Y^{t-1},S^{t-1},U^{t-1}\right\}$ to denote combined sample space, where $\bar{S}_t=M_tS_t+\left( 1-M_t \right) \tilde{S}_t$. Since $\tilde{S}_t$ is random variable which is independent from $S_t,Y^{t-1},U^{t-1},S^{t-1}$, we could assume $P\left( M_t=0 \right) =P\left( M_t=1 \right) =0.5$.

Then we could rewrite the likelihood ratio as
\begin{equation}
	\log \frac{P\left( S_t,Y^{t-1},S^{t-1},U^{t-1} \right)}{P\left( \tilde{S}_t \right) P\left( Y^{t-1},S^{t-1},U^{t-1} \right)}=\log \frac{P\left( \bar{S}_t,Y^{t-1},S^{t-1},U^{t-1}|M_t=1 \right)}{P\left( \bar{S}_t,Y^{t-1},S^{t-1},U^{t-1}|M_t=0 \right)}.
\end{equation}
With, 
\begin{equation}
	P\left( \bar{S}_t,Y^{t-1},S^{t-1},U^{t-1}|M_t=1 \right) = \frac{P\left( M_t=1|\bar{S}_t,Y^{t-1},S^{t-1},U^{t-1} \right) P\left( \bar{S}_t,Y^{t-1},S^{t-1},U^{t-1} \right)}{P\left( M_t=1 \right)},
\end{equation}
\begin{equation}
	P\left( \bar{S}_t,Y^{t-1},S^{t-1},U^{t-1}|M_t=0 \right) = \frac{P\left( M_t=0|\bar{S}_t,Y^{t-1},S^{t-1},U^{t-1} \right) P\left( \bar{S}_t,Y^{t-1},S^{t-1},U^{t-1} \right)}{P\left( M_t=0 \right)}.
\end{equation}
we have
\begin{equation}
	\log \frac{P\left( S_t,Y^{t-1},S^{t-1},U^{t-1} \right)}{P\left( \tilde{S}_t \right) P\left( Y^{t-1},S^{t-1},U^{t-1} \right)}=\log \frac{P\left( M_t=1|\bar{S}_t,Y^{t-1},S^{t-1},U^{t-1} \right)}{P\left( M_t=0|\bar{S}_t,Y^{t-1},S^{t-1},U^{t-1} \right)}.
\end{equation}
Let $w_{t}^{\star}\left( \bar{S}_t,Y^{t-1},S^{t-1},U^{t-1} \right) =P\left( M_t=1|\bar{S}_t,Y^{t-1},S^{t-1},U^{t-1} \right) $, we have
\begin{equation}
	\log \frac{P\left( S_t,Y^{t-1},S^{t-1},U^{t-1} \right)}{P\left( \tilde{S}_t \right) P\left( Y^{t-1},S^{t-1},U^{t-1} \right)}=\log \frac{w_{t}^{\star}\left( \bar{S}_t,Y^{t-1},S^{t-1},U^{t-1} \right)}{1-w_{t}^{\star}\left( \bar{S}_t,Y^{t-1},S^{t-1},U^{t-1} \right)}.
\end{equation}
According to \cite{molavipour2021neural}, the optimal classifier $P\left( M_t|\bar{X}_t,Y^{t-1},U^{t-1},S^{t-1} \right) $ could be obtained via the cross entropy loss optimization \eqref{Eq.likeRatioOP1}.
Similarly, we could prove that
\begin{equation}
	\log \frac{P\left( S_t,S^{t-1},U^{t-1} \right)}{P\left( \tilde{S}_t \right) P\left( S^{t-1},U^{t-1} \right)}=\log \frac{\xi _{t}^{\star}\left( \bar{S}_t,S^{t-1},U^{t-1} \right)}{1-\xi _{t}^{\star}\left( \bar{S}_t,S^{t-1},U^{t-1} \right)},
\end{equation}
where $\xi_{t}^{\star}$ is the optimal solution of  \eqref{Eq.likeRatioOP2}.

\section{Proof of Theorem \ref{Th.gradientObj}}\label{App:Th.gradientObj}
The gradient of $\mathsf{E}_{\theta}\left[ \sum_{t=1}^T{c\left( X_t,U_t \right)} \right] $ with respect to $\theta$ is
\begin{align}\label{Eq:cost-Grad}
	\nabla _{\theta}\mathsf{E}_{\theta}\left[ \sum_{t=1}^T{c\left( X_t,U_t \right)} \right] =\mathsf{E}_{\theta}\left[ \left( \sum_{t=1}^T{c\left( X_t,U_t \right)} \right) \nabla _{\theta}\log P_{\theta}\left( \tau \right) \right] , 
\end{align}
where $\tau =\left( Y^T,X^T,Z^T,S^T,U^{T} \right)$ and $\nabla _{\theta}\log P_{\theta}\left( \tau \right)$ can be simplified via
\begin{align}
	\nabla _{\theta}\log P_{\theta}\left( \tau \right) &=\nabla _{\theta}\log \prod_{t=1}^T{P\left( Y_t|Y_{t-1} \right) p\left( X_t|X_{t-1},Y_{t-1},U_{t-1} \right) P\left( Z_t|X_t \right) \pi _{\theta}\left( \left. U_t,S_t \right|Z^t, S^{t-1},U^{t-1} \right)} \nonumber \\ &=\sum_{t=1}^T{\nabla _{\theta}\log \pi _{\theta}\left( \left. U_t,S_t \right|Z^t, S^{t-1},U^{t-1} \right)}. \nonumber
\end{align}

The gradient of $\mathsf{E}_{\theta}\left[ \sum_{t=1}^T{c_{i,\theta}\left( S_t,Y^{t-1},S^{t-1},U^{t-1} \right)} \right]$ with respect to $\theta$ is
\begin{align}\label{Eq:MI-Grad}
	&\nabla _{\theta}\mathsf{E}_{\theta}\left[ \sum_{t=1}^T{c_{i,\theta}\left( S_t,Y^{t-1},S^{t-1},U^{t-1} \right)} \right] \nonumber \\
	=&\mathsf{E}_{\theta}\left[ \left( \sum_{t=1}^T{c_{i,\theta}\left( S_t,Y^{t-1},S^{t-1},U^{t-1} \right)} \right) \nabla _{\theta}\log P_{\theta}\left( \tau \right) \right] +\mathsf{E}_{\theta}\left[ \sum_{t=1}^T{\nabla _{\theta}c_{i,\theta}\left( S_t,Y^{t-1},S^{t-1},U^{t-1} \right)} \right] .  
\end{align}
The second term in the right-hand side of \eqref{Eq:MI-Grad} is 0, since it can be computed via
\begin{align}
	\mathsf{E}_{\theta}\left[ \nabla _{\theta}c_{i,\theta}\left( S_t,Y^{t-1},S^{t-1},U^{t-1} \right) \right] =\mathsf{E}_{\theta}\left[ \nabla _{\theta}\log \frac{P_{\theta}\left( S_t \middle| Y^{t-1},S^{t-1},U^{t-1} \right)}{P_{\theta}\left( S_t \middle| S^{t-1},U^{t-1} \right)} \right] ,
\end{align}
where
\begin{align}
	\mathsf{E}_{\theta}\left[ \nabla _{\theta}\log P_{\theta}\left( S_t \middle| Y^{t-1},S^{t-1},U^{t-1} \right) \right]& =\mathsf{E}_{\theta}\left[ \sum_{s_t}{P_{\theta}\left( s_t \middle| Y^{t-1},S^{t-1},U^{t-1} \right)}\frac{\nabla _{\theta}P_{\theta}\left( s_t \middle| Y^{t-1},S^{t-1},U^{t-1} \right)}{P_{\theta}\left( s_t \middle| Y^{t-1},S^{t-1},U^{t-1} \right)} \right]  \nonumber
	\\
	&=\mathsf{E}_{\theta}\left[ \nabla _{\theta}\sum_{s_t}{P_{\theta}\left( s_t \middle| Y^{t-1},S^{t-1},U^{t-1} \right)} \right] \nonumber
	\\
	&=\mathsf{E}_{\theta}\left[ \nabla _{\theta}1 \right] \nonumber
	\\
	&=0, \nonumber
\end{align}
and $\mathsf{E}_{\theta}\left[ \nabla _{\theta}\log P_{\theta}\left( S_t \middle| S^{t-1},U^{t-1} \right) \right] =0$ similarly.

Therefore, the gradient of $L_\theta$ with respect to $\theta$ is \eqref{Eq.gradObj}.
\section{Proof of Lemma \ref{Lm.MI}}\label{App:Lm.MI}
Let $S_{0}=U_{0}=\emptyset$. We next show the mutual information could be decomposed into a conditional mutual information series.
\begin{align}
	I\paren{S^T, U^{T}; Y^T} & \overset{(a)}{=} \sum_{t=1}^{T} I\paren{\left. S_t, U_t; Y^T \right| S^{t-1}, U^{t-1}} \nonumber\\
	& \overset{(b)}{=} \sum_{t=1}^{T} I\paren{\left. S_t, U_t; Y^{t-1} \right| S^{t-1}, U^{t-1}} + I\paren{\left. S_t, U_t; Y^T_t \right| S^{t-1}, U^{t-1}, Y^{t-1}} \nonumber
\end{align}
where $(a)$ and $(b)$ follow from the chain rule. We next show the second term is 0. Under the causal information structure, $S_t$ is generated from $(Z^t,S^{t-1},U^{t-1})$, and $U_t$ is generated from $(S^t,U^{t-1})$. Moreover, $X_t$ and $Z^t$ depend only on $Y^{t-1}$, while the future private process $Y_t^T$ is exogenous and conditionally independent of $(X_t,Z^t,S_t,U_t)$ given $Y^{t-1}$. Hence,
\[
P(y_t^T,s_t,u_t \mid s^{t-1},u^{t-1},y^{t-1})
=
P(y_t^T \mid s^{t-1},u^{t-1},y^{t-1})
P(s_t,u_t \mid s^{t-1},u^{t-1},y^{t-1}),
\]
which implies
\[
I(S_t,U_t;Y_t^T \mid S^{t-1},U^{t-1},Y^{t-1})=0.
\]

Since $U_t$ is independent of $Y^{t-1}$ given $\left(S^t,U^{t-1}\right)$, we have $I\paren{\left. S_t, U_t; Y^{t-1} \right| S^{t-1}, U^{t-1}} = I\paren{\left. S_t; Y^{t-1} \right| S^{t-1}, U^{t-1}}$, therefore,
$$\min_{\pi^e, \pi^c} L\left(\pi^e, \pi^c\right)= \min_{\left\{ \pi _{t}^e,\pi _{t}^c \right\} _{t=1}^{T}} \sum_{t=1}^{T}{\mathsf{E}\left[ c\left( X_t,U_t \right) \right]} +\lambda I\left( \left. S_t;Y^{t-1} \right| S^{t-1}, U^{t-1}\right).$$
\section{Proof of Lemma \ref{Lm.BeliefStateLoss}}\label{App:Lm.BeliefStateLoss}
Given $\paren{S^{t-1},U^{t-1}}$ and the collection $\gamma^e_t$, the conditional mutual information is given by
\begin{equation}
	\begin{aligned}
		\mathsf{E}\left[ \left. c_i\left( S_t,Y^{t-1},S^{t-1},U^{t-1} \right) \right|S^{t-1},U^{t-1} \right] =\sum_{s_t,y^{t-1}}{P\left( \left. s_t,y^{t-1} \right|S^{t-1} \right) \log \frac{P\left( \left. s_t,y^{t-1} \right|S^{t-1},U^{t-1} \right)}{P\left( \left. s_t \right|S^{t-1},U^{t-1} \right) P\left( \left. y^{t-1} \right|S^{t-1},U^{t-1} \right)}}, \nonumber
	\end{aligned}
\end{equation}
where each probability satisfies
$$P\left(\left. s_t,y^{t-1}\right|S^{t-1}, U^{t-1} \right) =\sum_{z^t}{q_t\left(\left. s_t\right|z^t\right) P\left(\left. y^{t-1},z^t\right|S^{t-1}, U^{t-1} \right)}=\sum_{z^t}{q_t\left( \left. s_t\right|z^t \right) \int{b_{t}^e\left(x_t, y^{t-1},z^t \right)dx_t}},$$

$$P\left(\left. s_t\right|S^{t-1}, U^{t-1} \right) =\sum_{y^{t-1},z^t}{q_t\left(\left. s_t\right|z^t\right) P\left(\left. y^{t-1},z^t\right|S^{t-1}, U^{t-1} \right)}=\sum_{y^{t-1},z^t}{q_t\left( \left. s_t\right|z^t \right) \int{b_{t}^e\left(x_t, y^{t-1},z^t \right)dx_t}},$$

$$P\left(\left. y^{t-1}\right|S^{t-1}, U^{t-1} \right) =\sum_{z^t}{P\left(\left. y^{t-1},z^t\right|S^{t-1}, U^{t-1} \right)}=\sum_{z^t}{\int{b_{t}^e\left(x_t, y^{t-1},z^t \right)dx_t}}.$$
Therefore, we have \eqref{Eq.condMIProof}. Similarly, the stage cost $\mathsf{E}\left[ \left. c\left( X_t,U_t \right) \right|S^{t-1},U^{t-1} \right]$ can be computed via \eqref{Eq.condCtrCostProof}.

Besides, the belief state $b_{t+1}^e$ can be computed via
\begin{align}
	&b_{t+1}^{e}\left( x_{t+1},y^t,z^{t+1} \right) \nonumber \\
	&=p\left( \left. x_{t+1} \right|y^t,z^{t+1},U^t \right) P\left( \left. y^t,z^{t+1} \right|S^t,U^t \right) \nonumber
	\\
	&=\frac{\int{p\left( \left. x_{t+1},x_t \right|y^t,z^{t+1},U^t \right) P\left( \left. S_t, U_t,y^t,z^{t+1} \right|S^{t-1},U^{t-1} \right) dx_t}}{P\left( \left. S_t, U_t \right|S^{t-1},U^{t-1} \right)} \nonumber
	\\
	&\overset{\left( a \right)}{=}\frac{\int{P\left( z_{t+1}|x_{t+1} \right) P\left( y_t|y_{t-1} \right) p\left( \left. x_{t+1} \right|x_t,y_t,U_t \right) P\left( \left. S_t \right|z^t,S^{t-1},U^{t-1} \right) p\left( \left. x_t \right|y^{t-1},z^t,U^{t-1} \right) P\left( \left. y^{t-1},z^t \right|S^{t-1},U^{t-1} \right) dx_t}}{P\left( \left. S_t, U_t \right|S^{t-1},U^{t-1} \right)} \nonumber
	\\
	&\overset{\left( b \right)}{=}\frac{P\left( z_{t+1}|x_{t+1} \right) P\left( y_t|y_{t-1} \right) q_t\left( \left. S_t \right|z^t \right) \int{p\left( x_{t+1}|x_t,y_t,\gamma _{t}^{c}\left( \gamma _{t}^{e},S_t \right) \right) b_{t}^{e}\left( x_t,y^{t-1},z^t \right) dx_t}}{\int{\sum_{y^{t-1},z^t}{q_t\left( \left. S_t \right|z^t \right) b_{t}^{e}\left( x_t,y^{t-1},z^t \right) dx_t}}}, \nonumber
\end{align}
where $(a)$ follows the system dynamics, $(b)$ follows from the fact that the denominator in $(a)$ is a marginal distribution of the numerator. Notably, $b_{t+1}^e$ depends on the policy collections $\gamma^e_t$, $\gamma^c_t$ and the quantization index $S_t$, therefore, we denote this update by $b_{t+1}^e=\Phi\left( b_{t}^e,\gamma_t,S_t \right)$.

\section{Proof of Lemma \ref{Lm.BeliefStateLoss2}}\label{App:Lm.BeliefStateLoss2}
Using the definitions of $b_{t}^e$ and $b_{t}^c$, we compute $b_{t}^c$ based on $b_{t}^e$ as follows, 
\begin{align}
	b_{t}^c\left( x_t,y^{t-1},z^t \right) &= p\left(\left. x_t\right|y^{t-1},z^t,U^{t-1} \right) P\left(\left. y^{t-1},z^t\right|S^t,U^{t-1} \right) \nonumber\\
  &= \frac{p\left(\left. x_t\right|y^{t-1},z^t,U^{t-1} \right) P\left(\left. y^{t-1},z^t, S_t \right|S^{t-1},U^{t-1} \right)}{P\left(\left. S_t \right|S^{t-1},U^{t-1} \right)} \nonumber\\
  &\overset{(a)}{=} \frac{q_t\paren{\left. S_t \right| z^t} p\left(\left. x_t\right|y^{t-1},z^t,U^{t-1} \right) P\left(\left. y^{t-1},z^t\right|S^{t-1},U^{t-1} \right)}{\sum_{y^{t-1},z^t}\int{q_t\paren{\left. S_t \right| z^t} p\left(\left. x_t\right|y^{t-1},z^t,U^{t-1} \right) P\left(\left. y^{t-1},z^t\right|S^{t-1},U^{t-1} \right)dx_t}} \nonumber\\
  &\overset{(b)}{=} \frac{q_t\paren{\left. S_t \right| z^t} b_{t}^e\paren{x_t,y^{t-1},z^t}}{\sum_{y^{t-1},z^t}\int{q_t\paren{\left. S_t \right| z^t} b_{t}^e\paren{x_t,y^{t-1},z^t}dx_t}}. \nonumber
\end{align}
where the numerator in $\paren{a}$ is derived by extracting $q_t\paren{\left. S_t \right| z^t}$ from $P\left(\left. y^{t-1},z^{t},S_t\right|S^{t-1}, U^{t-1} \right)$, and $(b)$ follows from the definition of $b_{t}^e$.
Also, we compute $b_{t+1}^e$ with $b_{t}^c$ as follows,
\begin{align} 
	&b_{t+1}^e\left( x_{t+1},y^{t},z^{t+1} \right) \nonumber\\
	=& p\left( \left. x_{t+1}\right|y^{t},z^{t+1},U^{t} \right) P\left(\left. y^{t},z^{t+1}\right|S^{t},U^{t} \right) \nonumber \\
	\overset{(a)}{=}& P\paren{\left. z_{t+1} \right| x_{t+1}} \int{p\paren{\left. x_{t+1}, x_t\right|y^{t},z^{t},U^{t}}dx_t} P\left(\left. y^{t},z^{t}\right|S^{t},U^{t} \right) \nonumber\\
	\overset{(b)}{=}& P\paren{\left. z_{t+1} \right| x_{t+1}} \int{p\left(\left. x_{t+1}\right|x_t,y_t,U_t \right) p\paren{\left. x_t\right|y^{t-1},z^{t},U^{t}}dx_t} P\left(\left. y^{t-1},z^{t}\right|S^{t},U^{t} \right)P\paren{\left. y_t\right| y^{t-1}, z^t, S^t, U^t} \nonumber\\
	\overset{(c)}{=}& P\paren{\left. z_{t+1} \right| x_{t+1}} \int{p\left(\left. x_{t+1}\right|x_t,y_t,U_t \right) p\paren{\left. x_t\right|y^{t-1},z^{t},U^{t-1}}dx_t} P\left(\left. y^{t-1},z^{t}\right|S^{t},U^{t-1} \right) P\paren{\left. y_t\right| y_{t-1}} \nonumber\\
	\overset{(d)}{=}&\int{P\left(\left. z_{t+1}\right|x_{t+1} \right) p\left(\left. x_{t+1}\right|x_t,y_t,U_t \right) P\left(\left. y_t\right|y_{t-1} \right) b_{t}^c\left( x_t,y^{t-1},z^t \right) dx_t}, \nonumber
\end{align}
where $(a)$ and $(b)$ follow the system dynamics, $(c)$ is derived by removing irrelevant conditional information, $(d)$ follows from the definition of $b_{t}^c$.

Given $\paren{S^{t},U^{t-1}}$ and the controller $U_t=\pi_t^c\paren{S^t,U^{t-1}}$, the conditional stage cost is given by
\begin{align}
	\mathsf{E}\left[\left. c\paren{X_t,U_t}\right|S^{t}, U^{t-1} \right] &= \sum_{y^{t-1},z^t}{\int{p\paren{\left. x_t \right| S^t,U^{t-1}} c\left( x_t,U_t \right)dx_t}} \nonumber \\ 
																			   &\overset{(a)}{=} \sum_{y^{t-1},z^t}{\int{p\left(\left. x_t\right|y^{t-1},z^t,S^t,U^{t-1} \right) P\left(\left. y^{t-1},z^t\right|S^t,U^{t-1} \right) c\left( x_t,U_t \right)dx_t}} \nonumber \\
																			   &\overset{(b)}{=} \sum_{y^{t-1},z^t}{\int{b_{t}^c\left( x_t,y^{t-1},z^t \right) c\left( x_t,U_t \right)dx_t}}, \nonumber
\end{align}
where $(a)$ follows from the chain rule, $(b)$ follows from the Markov chain $X_t \rightarrow \paren{Z^t, U^{t-1}} \rightarrow S^t$ and the definition of $b_{t}^c$.
\end{document}